\crefname{app}{Appendix}{Appendices}
\newtheorem{theorem}{Theorem}
\newtheorem{cor}{Corollary}
\newtheorem{prop}{Proposition}
\newtheorem{defn}{Definition}
\newtheorem{exam}{Example}
\crefname{cor}{Corollary}{Corollary}
\crefname{prop}{Proposition}{Proposition}
\crefname{lemma}{Lemma}{Lemma}
\newcommand{\bs}{\boldsymbol}
\newcommand{\bb}{\mathbb}
\newcommand{\eye}{\bs{I}}
\newcommand{\zero}{\bs{0}}
\newcommand{\lb}{\left(}
\newcommand{\rb}{\right)}
\newcommand{\lc}{\left\{}
\newcommand{\rc}{\right\}}
\newcommand{\ld}{\left.}
\newcommand{\lv}{\left\vert}
\newcommand{\rv}{\right\vert}
\newcommand{\lV}{\left\Vert}
\newcommand{\rV}{\right\Vert}
\newcommand{\LRV}[1]{{\left\vert\kern-0.25ex\left\vert\kern-0.25ex\left\vert #1 \right\vert\kern-0.25ex\right\vert\kern-0.25ex\right\vert}}
\newcommand{\nth}{^\mathrm{th}}
\newcommand{\tran}{^{\mathsf{T}}}
\newcommand{\rank}[1]{\mathrm{Rank}\lc#1\rc}
\newcommand{\matA}{\bs{A}}
\newcommand{\calCS}{\mathcal{CS}}
\newcommand{\bbC}{\bb{C}}
\newcommand{\matD}{\bs{D}}
\newcommand{\calH}{\mathcal{H}}
\newcommand{\matH}{\bs{H}}
\newcommand{\matR}{\bs{R}}
\newcommand{\bbR}{\bb{R}}
\newcommand{\calS}{\mathcal{S}}
\newcommand{\matU}{\bs{U}}
\newcommand{\matV}{\bs{V}}
\newcommand{\matW}{\bs{W}}
\newcommand{\vech}{\bs{h}}
\newcommand{\vecu}{\bs{u}}
\newcommand{\vecx}{\bs{x}}
\newcommand{\vecy}{\bs{y}}
\newcommand{\vecz}{\bs{z}}
\newcommand{\vecalpha}{\bs{\alpha}}
\newcommand{\vecbeta}{\bs{\beta}}
\newcommand{\vecgamma}{\bs{\gamma}}
\newcommand{\matPsi}{\bs{\Psi}}
\pretocmd\@bibitem{\color{black}\csname keycolor#1\endcsname}{}{\fail}
\newcommand\citecolor[1]{\@namedef{keycolor#1}{\color{blue}}}
\begin{document}

\title{Controllability of Linear Dynamical Systems Under Input Sparsity Constraints}
\author{Geethu Joseph and Chandra R. Murthy {\it Senior Member, IEEE}
\thanks{The authors are with the Dept.\ of ECE at IISc, Bangalore, India, Emails:\{geethu, cmurthy\}@iisc.ac.in.}
\thanks{The work of G. Joseph was supported in part by the Intel India PhD fellowship, and the work of C. R. Murthy was supported in part by the MeitY Young Faculty Research Fellowship.}
}
\maketitle
\begin{abstract}
In this work, we consider the controllability of a discrete-time linear dynamical system with \emph{sparse} control inputs. 
Sparsity constraints on the input arises naturally in networked systems, where activating each input variable adds to the cost of control. 
We derive algebraic necessary and sufficient conditions for ensuring controllability of a system with an arbitrary transfer matrix. The derived conditions can be verified in polynomial time complexity, unlike the more traditional Kalman-type rank tests. 
Further, we characterize the minimum number of input vectors required to satisfy the derived conditions for controllability. Finally, we present a generalized Kalman decomposition-like procedure that separates the state-space into subspaces corresponding to sparse-controllable and sparse-uncontrollable parts. These results form a theoretical basis for designing networked linear control systems with sparse inputs.
\end{abstract}

\begin{IEEEkeywords}
Linear dynamical systems, sparsity, controllability, Kalman rank test, PBH test, switched linear systems.
\end{IEEEkeywords}

\section{Introduction}
\label{sec:intro}
Networked control systems 
have attracted intense research attention from both academia and industry over the past  decades~\cite{Pasqualetti_Controllability_2014, Tatikonda_Control_2000,Liu_Control_2016,Chanekar_Optimal_2017,Nozari_Time_2017}. In such a system, 
the notion of controllability refers to the ability to drive the system from an arbitrary initial state to a desired final state in a finite amount of time. Complete characterization of controllability of linear dynamical systems using unconstrained inputs have pure algebraic rank-based forms, and are rather easily verifiable~\cite{Kalman_General_1959,
Hautus_Stabilization_1970}. 
These conditions involve verification of the rank conditions of suitably defined matrices. 
However, in applications involving networked control systems, it is often necessary to select a small subset of the available sensors or actuators at each time instant, due to communication bandwidth, cost, or energy constraints~\cite{Nagahara_Sparse_2011,Li_Sparse_2016}. Further, it is often desirable to select a different subset of nodes at each time instant to improve the network lifetime~\cite{Jadbabaie_Deterministic_2018}.  
{For example:
\begin{itemize}
\item In an energy-constrained network, energy-aware scheduling of actuators can help to extend the battery life of the nodes~\cite{Jadbabaie_Deterministic_2018}. While choosing a small subset of nodes at each time instant helps in reducing the control overhead, repeatedly using the same set of nodes over time drains the batteries of the selected nodes. Hence, it is desirable to choose a different subset of nodes at each time instant to improve the network lifetime. 
\item In a system where the controller and plant communicate over a network, the control signals are required to meet the bandwidth constraints imposed by the links over which they are exchanged~\cite{Nagahara_Sparse_2011,Li_Sparse_2016}.  Using a non-sparse control input requires higher bandwidth, and restricting the control signals to a fixed support may severely limit the set of admissible inputs to the system. On the other hand, using different supports provides much greater flexibility without significantly increasing the communication requirements. Therefore, this strategy combines the benefits of the two approaches.
\item The opinion dynamics in a social network is often modeled using a linear opinion propagation framework~\cite{De_Learning_2014,Wendt_Control_2019}. Here, the state of the system is denoted by a vector containing the opinion of each individual in the network, and the transition matrix is determined by the network topology. Further, it is assumed that an agent  desires to drive the network opinion to a particular state by influencing only a few people on the network. For example, a company may distribute free samples of its products to  some members of the network, under a budget constraint on the number of free samples distributed. Also, for better marketing, the company may want to give the free samples to different members over time, instead of giving samples to the same set of people. Here also, the support of the sparse control signal varies over time.
\item In an airplane environmental control system, the air quality in the cabin is maintained by operating several valves onboard the aircraft. Here, the state of the system is represented by a vector containing the value of a desired parameter (for example, the carbon dioxide level) across the airplane, and opening or closing a valve is tantamount to applying a specific vector-valued input. In this case also, it is desirable to maintain the air quality by operating as few valves as possible at each time instant, and changing the support is helpful to avoid wearing out a fixed set of valves and possibly to achieve faster control. 
\end{itemize}}
Now, when the number of actuators or input variables that can be activated at each time instant is limited, the system may become uncontrollable because all the feasible control signals are restricted to lie in the union of low-dimensional subspaces.  The controllability of linear dynamical systems under sparse input constraints is the focus of this paper.

\subsection{ Related Literature}
We first discuss the relationship between the problem considered in this paper and the existing literature in control theory and sparse signal processing.

\subsubsection{Time-varying actuator scheduling problem}

This problem focuses on finding a schedule for sparse actuator control, such that the system is sparse-controllable~\cite{Chanekar_Optimal_2017,Nozari_Time_2017,
Jadbabaie_Deterministic_2018}. These works rely on a well known condition for controllability, namely, an extended version of the Kalman rank test. This test depends on the rank of the so-called Gramian matrix of the sparsity-constrained system.\footnote{Refer to \cite[Section II.B]{Jadbabaie_Deterministic_2018} for details.} However, finding sequence of control inputs that satisfy the rank condition on the Gramian matrix is a combinatorial problem, and it is known to be NP-hard~\cite{Olshevsky_Minimal_2014,Tzoumas_Minimal_2016}. Moreover, it has been recently shown that the relatively simpler problem of finding a sparse set of actuators to guarantee reachability of a particular  state is hard to approximate, even when a solution is known to exist~\cite{Jadbabaie_Minimal_2018}. Hence, different quantitative measures of controllability based on the Gramian matrix have been considered: smallest eigenvalue, the trace of the inverse, inverse of the trace, the determinant, maximum entry in the diagonal, etc.~\cite{Jadbabaie_Deterministic_2018}. Based on these metrics, several algorithms and related guarantees are available in the literature~\cite{Pasqualetti_Controllability_2014,Chanekar_Optimal_2017,Nozari_Time_2017}. However, none of the above mentioned references directly address the fundamental question of whether or not the system can be controlled by sparse inputs. Further, direct extension of the Kalman rank test leads to a combinatorial problem that is computationally infeasible to solve in practice. Thus, the goal of our paper is  to study the controllability of a linear dynamical system under sparsity constraints without directly relying on Gramian matrix. We are not interested in finding the optimal actuator selection; rather we deal with the more basic problem of deriving conditions for the existence of a selection that drives the system from any initial state to any final state. 

\subsubsection{Minimal input selection problem}The minimal input selection involves selecting a small set of input variables so that the system is controllable using the selected set~\cite{Liu_Minimal_2017, Tzoumas_Minimal_2016,Olshevsky_Minimal_2014}. This problem is a special case of our sparse input problem because of the extra constraint that the support of the control input remains unchanged for all time instants. Moreover, the controllability conditions for the minimal input selection problem can be easily be derived from the classical controllability results for the unconstrained system. We discuss and contrast the two cases in detail in \Cref{sec:mmv_sparse}.

\subsubsection{Design of sparse control inputs}
Some recent works connecting compressive sensing and control theory focus on the design of control inputs~\cite{Charles_Short_2014, Sefati_Linear_2015, Kafashan_Relating_2016}. 
They propose algorithms for the recovery (design) of sparse control inputs based on the observations, and derive conditions under which the input can be uniquely recovered using a limited number of observations~\cite{Charles_Short_2014, Sefati_Linear_2015, Kafashan_Relating_2016}. These problems do not deal with controllability related issues, rather assume the existence of sparse control inputs and initial state for reaching a given final state.

\subsubsection{Observability under sparsity constraints} Due to the recent advances in sparse signal processing and compressed sensing, researchers have recently started looking at the observability of linear systems with a sparse initial state~\cite{Dai_obsrvability_2013,Wakin_observability_2010,Joseph_Observability_2018, Joseph_Observability_2019_TSP}. For a system with unconstrained inputs, observability and controllability are dual problems and do not require separate analysis. However, our problem assumes a general initial state and sparse control inputs, whereas~\cite{Joseph_Observability_2018, Dai_obsrvability_2013,Wakin_observability_2010, Joseph_Observability_2019_TSP} consider a sparse initial state and 
known control inputs. Therefore, the problems have different sparsity pattern models, and consequently require separate analysis.
\subsubsection{Sparse signal recovery guarantees}The sparse controllability problem studies the conditions that ensure the existence of sparse control inputs to drive a linear system from any given state to any other state. Moreover, it is not required that the solution be unique. In contrast, the focus of traditional sparse signal processing studies is on developing algorithms and guarantees for the cases where the linear system is already known to admit a sparse solution~\cite{Foucart_math_2013,Donoho_Compressed_2006,Candes_Robust_2006,Baraniuk_Compressive_2007}. Also, the structure of the effective measurement matrix that arises in the context of linear dynamical systems is different from the type of random measurement matrices that are usually considered in the compressed sensing literature. 

The problem of controllability using sparse inputs is completely different in flavor compared to the existing work in control theory. Also, the solution to the problem is cannot be obtained using any of the available tools from sparse signal processing.

\subsection{ Our Contributions}
In this paper, we answer the following key questions:
\begin{enumerate}[label=Q\arabic*]
\item What are necessary and sufficient conditions for ensuring controllability under sparse input constraints? Can we devise a computationally simple  test for controllability? \label{q:Q1}
\item If a system is controllable using sparse inputs, how many control input vectors needed to drive the system from a given initial state to an arbitrary final state?\label{q:Q2}
\item If the system is not controllable using sparse inputs, what part of the state space is reachable using sparse inputs? 
\label{q:Q3}
\end{enumerate} 
Answering the above questions requires a fresh look at controllability, and we start by deriving a Popov-Belevitch-Hautus (PBH)-like test~\cite{Hautus_Stabilization_1970}, which, unlike the Gramian matrix-based tests, allows one to check for sparse-controllability of a system without solving a combinatorial problem. Our specific contributions are as follows:
\begin{enumerate}
\item We establish a set of the necessary and sufficient conditions for the controllability of a linear system under sparse inputs in \Cref{sec:control}. Using these conditions, we present a simple procedure to check the controllability of any system using sparse inputs. 
\item We upper and lower bound   the minimum number of input vectors that can steer the system from any given initial state to any desired final state in \Cref{sec:length}. 
We show that the upper bound is no more than the length of the state vector, which is also an upper bound for the minimum number of input vectors for an unconstrained system.
\item We present a procedure to convert a representation of any linear dynamical system into a \emph{standard form} in \Cref{sec:decomposition}. The standard form separates the state-space into uncontrollable, sparse-uncontrollable  and sparse-controllable components. 
\end{enumerate}
In a nutshell, this paper presents new results on the controllability of linear dynamical systems under sparsity constraints on the input. We also note that the classical results for the unconstrained system can be recovered as a special case of our results, by relaxing the sparsity constraint. 
 
\textbf{Notation:} In the sequel, 
we use $\lv \cdot \rv$ to denote the cardinality of a set and $\lV  \cdot \rV_0$ to denote the $\ell_0$ norm of a vector. For any positive integer $a$, $[a]$ denotes the set $\lc1,2,\ldots,a\rc$. The symbols $\eye$ and $\zero$ represent the identity matrix and the all zero matrix (or vector), respectively. The notation $\matA_{i}$  denotes the $i\nth$ column of the matrix $\matA$, and $\matA_{\calS}$ represents the submatrix of $\matA$ formed by the columns indexed by the set $\calS$. Also, $\calCS\lc\cdot\rc$, $\rank{\cdot}$ and $\lb\cdot\rb\tran$ represent the column space, rank and transpose of a matrix, respectively.

\section{System Model}
We consider the  discrete-time linear dynamical system whose state at time $k$, denoted by $\vecx_k \in \bbR^N$, evolves as
\begin{equation}
\vecx_k = \matD\vecx_{k-1}+\matH\vech_{k},\label{eq:sys}
\end{equation}
where the transfer matrix $\matD\in\bbR^{N\times N}$
and input matrix $\matH\in\bbR^{N\times L}$. Here, the input vectors $\vech_k \in \bbR^L$ are assumed to be sparse, i.e., $\lV\vech_k\rV_0\leq s$, for all values of $k$. We denote the rank of the matrices $\matD$ and $\matH$ using $R_{\matD}$ and $R_{\matH}$, respectively.

We formally define the notion of controllability using sparse inputs as follows:
\begin{defn}[Sparse-controllability]\label{def:sparsecontrol}
The system in \eqref{eq:sys} is said to be  $s$-sparse-controllable if, for any initial state $\vecx_0 = \vecx_{\text{init}}$ and any final state $\vecx_{\text{final}}$, there exists inputs $\lc \vech_k\rc_{k=1}^K$ such that $\lV\vech_k\rV_0\leq s$, which steers the system from the state $\vecx_0 = \vecx_{\text{init}}$ to $\vecx_K = \vecx_{\text{final}}$ for some finite~$K$.  
\end{defn}
Next, to characterize the sparse-controllability of the system, we consider the following equivalent system of equations:
\begin{equation}\label{eq:sys_concat}
\vecx_{K}-\matD^K\vecx_0 = \tilde{\matH}_{(K)}\vech_{(K)},
\end{equation}
where we define the matrices as follows:
\begin{align}
\tilde{\matH}_{(K)} &= \begin{bmatrix}
 \matD^{K-1}\matH & \matD^{K-2}\matH &\ldots \matH
\end{bmatrix}\in\bbR^{N\times KL}\label{eq:control_mat_defn}\\
 \vech_{(K)}&=\begin{bmatrix}
\vech_1\tran&\vech_2\tran&\ldots&\vech_{K}\tran
\end{bmatrix}\tran\in\bbR^{KL}.
\end{align}
Note that $\vech_{(K)}$ is a \emph{piecewise sparse vector} formed by concatenating $K$  vectors, each with sparsity at most~$s$. 

\section{Necessary and Sufficient Conditions for Sparse-Controllability} \label{sec:control}
This section addresses question \ref{q:Q1} in \Cref{sec:intro}. 
Now,  it is known that the system is sparse-controllable if, for some finite $K$, there exists index sets $\lc\calS_i\rc_{i=1}^K$,  $\calS_i\subseteq\lc 1,2,\ldots,L\rc$, $\lv\calS_i\rv=s$, for $i=1,2,\ldots,K$, such that the following submatrix of $\tilde{\matH}_{(K)}$ has rank $N$:
\begin{equation}\label{eq:submatrix_form}
\begin{bmatrix}
\matD^{K-1}\matH_{\calS_1} & \matD^{K-2}\matH_{\calS_2} & \ldots & \matH_{\calS_K}
\end{bmatrix}\in\bbR^{N\times Ks}.
\end{equation}
In the sequel, we refer this condition to as the \emph{Kalman-type rank test}. Note that the first $(K-1)N$ columns of $\tilde{\matH}_{(K)}$ belong to $\calCS\lc\matD\rc$. Hence, to satisfy the Kalman-type rank test, $\calS_{K}$ should be such that $\calCS\lc\matH_{\calS_{K}}\rc
$ should contain the left null space of $\matD$.  {Thus, a necessary condition for sparse-controllability is the existence of an index set $\calS$ with $s$ entries such that  $\rank{\begin{bmatrix}
\matD & \matH_{\calS}
\end{bmatrix}} = N$,
which is possible only if $s\geq N-R_{\matD}$. Further, a system can be sparse-controllable only if it is controllable using unconstrained inputs. Therefore, for sparse-controllability, it is necessary that the system is controllable and $s\geq N-R_{\matD}$. In fact, these two conditions are not only necessary but also sufficient, as we show in the following theorem:}


{\begin{theorem}\label{thm:necessary_sufficient}
The  system in \eqref{eq:sys} is $s$-sparse-controllable if and only if $\rank{\begin{bmatrix}
\lambda\eye-\matD & \matH
\end{bmatrix}}=N\leq s+R_{\matD}
$
for all $\lambda\in\bbC$.
\end{theorem}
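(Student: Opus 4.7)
My plan is to handle necessity and sufficiency separately. For necessity there are two subconditions to verify. The PBH-type rank condition on $\begin{bmatrix}\lambda\eye-\matD & \matH\end{bmatrix}$ follows because $s$-sparse-controllability trivially implies unconstrained controllability (enlarge $s$ to $L$), so the classical PBH test applies. The inequality $N \leq s + R_{\matD}$ is exactly the observation made in the paragraph preceding the theorem: every column of $\matD^{K-k}\matH$ with $k<K$ lies in $\calCS\lc\matD\rc$, so the last selected block $\matH_{\calS_{K}}$ alone must supply the $N-R_{\matD}$ directions outside $\calCS\lc\matD\rc$, which forces $s \geq N - R_{\matD}$.

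For sufficiency I would build supports $\calS_{1},\ldots,\calS_{K}$ of size $s$ in two stages. First I pick $\calS_{K}$: the PBH condition at $\lambda = 0$ gives $\rank{\begin{bmatrix}\matD & \matH\end{bmatrix}} = N$ (this is automatic when $\matD$ is nonsingular and is the binding case otherwise), which combined with $s \geq N-R_{\matD}$ lets me choose $N-R_{\matD}$ columns of $\matH$ whose images in $\bbR^{N}/\calCS\lc\matD\rc$ form a basis of that quotient, then pad to size $s$ arbitrarily. This yields $\calCS\lc\matH_{\calS_{K}}\rc + \calCS\lc\matD\rc = \bbR^{N}$, and under this choice the rank of the full submatrix in \eqref{eq:submatrix_form} equals $N$ if and only if the first $K-1$ blocks, all of which lie in $\calCS\lc\matD\rc$, together span $\calCS\lc\matD\rc$.

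The second and technically hardest step is to choose $\calS_{1},\ldots,\calS_{K-1}$ such that $\sum_{k=1}^{K-1}\calCS\lc\matD^{K-k}\matH_{\calS_{k}}\rc = \calCS\lc\matD\rc$. The starting point is the identity $[\matD\matH, \matD^{2}\matH, \ldots, \matD^{N}\matH] = \matD\cdot[\matH, \matD\matH, \ldots, \matD^{N-1}\matH]$, so classical controllability forces the column space on the left to equal $\calCS\lc\matD\rc$ and, in particular, a basis of $\calCS\lc\matD\rc$ of the form $\{\matD^{m_{l}}\matH_{j_{l}}\}_{l=1}^{R_{\matD}}$ exists. What remains is the combinatorial problem of distributing these basis vectors so that no single time index $m$ is assigned more than $s$ distinct column indices $j_{l}$. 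The main obstacle is that a fixed basis may concentrate too many indices at one time step; I plan to circumvent this by letting $K$ be arbitrarily large and exploiting the stabilization of the descending chain $\calCS\lc\matD\rc \supseteq \calCS\lc\matD^{2}\rc \supseteq \cdots$, which provides many equivalent time steps at which to realize the same ``deep'' directions, thereby giving a matroid-intersection-style freedom between the linear matroid on the eligible columns and the sparsity (partition) matroid. Making this spreading rigorous, and verifying that the rank condition $s \geq N-R_{\matD}$ together with PBH is exactly what is needed to make the distribution feasible, is the crux of the argument.
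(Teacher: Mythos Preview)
Your necessity argument is correct and matches the paper. For sufficiency, however, the step you yourself flag as the crux---distributing a basis $\{\matD^{m_l}\matH_{j_l}\}$ of $\calCS\lc\matD\rc$ so that no exponent $m$ carries more than $s$ column indices---is left as a sketch, and the matroid-intersection/chain-stabilization idea is not obviously completable: when $\matD$ has a nontrivial nilpotent part the ``many equivalent time steps'' heuristic breaks down because high powers $\matD^m\matH_j$ collapse to zero, and verifying the matroid-intersection rank formula in this setting would essentially amount to re-proving the theorem from scratch.

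The paper bypasses this combinatorial problem entirely by reducing to classical PBH on an auxiliary pair. Set $\tilde K=\lceil L/s\rceil$, choose size-$s$ index sets $\calS_1',\ldots,\calS_{\tilde K}'$ covering $[L]$ with $\rank{\begin{bmatrix}\matD&\matH_{\calS_{\tilde K}'}\end{bmatrix}}=N$ (this is your Step~1, using $s\ge N-R_{\matD}$), take $K=N\tilde K$, and assign $\calS_k=\calS'_{\lceil k/N\rceil}$. A column permutation turns the resulting submatrix of $\tilde{\matH}_{(K)}$ into $\begin{bmatrix}\matD^{N-1}\matH^*&\cdots&\matD\matH^*&\matH^*\end{bmatrix}$ with $\matH^*=\begin{bmatrix}\matD^{(\tilde K-1)N}\matH_{\calS_1'}&\cdots&\matD^{N}\matH_{\calS_{\tilde K-1}'}&\matH_{\calS_{\tilde K}'}\end{bmatrix}$, so full row rank is equivalent to the pair $(\matD,\matH^*)$ passing the classical Kalman/PBH test. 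If $\vecz\tran\matD=\lambda\vecz\tran$ and $\vecz\tran\matH^*=\zero$, then $\lambda\ne0$ forces $\vecz\tran\matH_{\calS_i'}=\zero$ for every $i$ and hence $\vecz\tran\matH=\zero$ (since the $\calS_i'$ cover $[L]$), contradicting PBH for $(\matD,\matH)$; while $\lambda=0$ forces $\vecz\tran\begin{bmatrix}\matD&\matH_{\calS_{\tilde K}'}\end{bmatrix}=\zero$, contradicting the choice of $\calS_{\tilde K}'$. The paper runs this as a contrapositive, but either way this reduction to classical PBH on $(\matD,\matH^*)$ is the missing idea; it replaces your open-ended combinatorial distribution step with a two-case eigenvector check.
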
}
\begin{proof}
See \Cref{app:necessary_sufficient}.
\end{proof}
{Note that  there are two separate conditions here: one, a condition on the rank of the matrix $\begin{bmatrix}
\lambda\eye-\matD & \matH
\end{bmatrix}\in\bbR^{N+L}$, which we refer to as \emph{the rank condition of \Cref{thm:necessary_sufficient}}; and two, a lower bound on the sparsity $s$, which we refer to as \emph{the inequality condition of \Cref{thm:necessary_sufficient}}.
The rank condition is same as the classical  PBH test~\cite{Hautus_Stabilization_1970} which is \emph{independent of the sparsity level $s$}, while the inequality condition is \emph{independent of the input matrix $\matH$}.}
We make the following further remarks: 

\begin{itemize}
\item 
 A reversible system, i.e., a system with an invertible state transition matrix $\matD$, is $s$-sparse-controllable for any $0 < s\leq L$ if and only if it is controllable. Similarly, when $L=1$, the notion of sparse-controllability and controllability are the same, and hence \Cref{thm:necessary_sufficient} reduces to the PBH test.
\item If the system defined by the  matrix pair $(\matD,\matH_{\calS})$ is controllable for some index set $\calS$ with $s$ entries, the system is $s$-sparse-controllable. In particular, a controllable system with $R_{\matH}\leq s$ is $s$-sparse-controllable. 

{\item 
 The system given by \eqref{eq:sys} is controllable using inputs that are $s$-sparse under a basis $\matPsi\in\bbR^{L\times L}$ if and only if the system is controllable using inputs that are $s$-sparse under the canonical basis. This  follows by replacing $\matH$ with $\matH\matPsi$ in \Cref{thm:necessary_sufficient}, and noting that for any $\lambda\in\bbC$, 
\begin{equation}
\rank{\begin{bmatrix}
\lambda\eye-\matD & \matH\matPsi
\end{bmatrix}}  =\rank{\begin{bmatrix}
\lambda\eye-\matD & \matH
\end{bmatrix}}. \nonumber
\end{equation}
\item 
 The verification of sparse-controllability has the same complexity as the classical PBH test. This is because, we only need to additionally check the inequality in \Cref{thm:necessary_sufficient}, and  $R_{\matD}$ is already known from the PBH test. 
Thus, 
\Cref{thm:necessary_sufficient} allows us to verify the controllability of any discrete system in polynomial complexity in $N$,
independent of the sparsity $s$. On the other hand, to verify the Kalman-type rank test, we need to perform $\binom{L}{s}^N$ rank computations.
Further, since the Kalman-type rank test involves powers of $\matD$, numerical stability also needs to be considered. 

}
\end{itemize}
{ 
\subsection{Output Controllability}
We consider the linear dynamical system described by \eqref{eq:sys} and the following output relation:
\begin{equation}
\vecy_k = \matA\vecx_k.\label{eq:sys_2}
\end{equation}
where the output matrix $\matA\in\bbR^{m\times N}$ with $m<N$. 
Similar to \Cref{def:sparsecontrol}, we define the notion of \emph{output $s$-sparse-controllability} as the existence of an $s$-sparse sequence of inputs which steers the system from initial state $\vecx_0$ to a final output $\vecy_K$, for some finite $K$.  Now, to characterize the output sparse-controllability, we consider the following equivalent system of equations:
\begin{equation}\label{eq:sys_concat_out}
\vecy_{K}-\matA\matD^K\vecx_0 = \matA\tilde{\matH}_{(K)}\vech_{(K)}.
\end{equation}

In  \cite{Westphal_Handbook_2012}, a Kalman test for output controllability of an \emph{unconstrained} system is derived, which states that the system given by \eqref{eq:sys} and \eqref{eq:sys_2} is output controllable if and only if the matrix $\matA\tilde{\matH}_{(K)}$ has full row rank for some finite $K$. However, a direct extension of this result to the case of output \emph{sparse-}controllability leads to a computationally expensive combinatorial test as follows. The system is output controllable if and only if, for some finite $K$, there exists a submatrix of $\matA\tilde{\matH}_{(K)}$ with rank $m$ of the form
\begin{equation*}
\matA\begin{bmatrix}
\matD^{K-1}\matH_{\calS_1} & \matD^{K-2}\matH_{\calS_2} & \ldots & \matH_{\calS_K}
\end{bmatrix}\in\bbR^{m\times Ks},
\end{equation*}
 such that the index set $\calS_i\subseteq\lc 1,2,\ldots,L\rc$ and $\lv\calS_i\rv=s$, for $i=1,2,\ldots,K$.
Hence, we first present the following PBH test-type result for output (unconstrained) controllability:
\begin{prop}
\label{prop:classical_out}
For an unconstrained system given by \eqref{eq:sys} and \eqref{eq:sys_2}, the system is output controllable only if,  for all $\lambda\in\bbC$, 
the rank of $\matA\begin{bmatrix}
\lambda\eye - \matD & \matH
\end{bmatrix}\in\bbR^{m\times (N+L)}$ is~$m$.
\end{prop}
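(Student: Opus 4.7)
My plan is to prove the contrapositive: I will show that if the rank of $\matA\begin{bmatrix}\lambda\eye - \matD & \matH\end{bmatrix}$ drops below $m$ for some $\lambda \in \bbC$, then $\matA\tilde{\matH}_{(K)}$ fails to have full row rank for every $K$, which by the Kalman-type output controllability test from \cite{Westphal_Handbook_2012} implies the system is not output controllable.

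So suppose there exists $\lambda_0 \in \bbC$ with $\rank\lc\matA\begin{bmatrix}\lambda_0\eye - \matD & \matH\end{bmatrix}\rc < m$. Then I can pick a nonzero left null vector $\vecv \in \bbC^m$ satisfying $\vecv\tran \matA(\lambda_0\eye-\matD) = \zero\tran$ and $\vecv\tran \matA\matH = \zero\tran$. The first relation rewrites as $\vecv\tran \matA \matD = \lambda_0 \vecv\tran \matA$, which exhibits $\vecv\tran \matA$ as a left eigenvector-like row of $\matD$. Iterating this identity gives $\vecv\tran \matA \matD^k = \lambda_0^k \vecv\tran \matA$ for every $k \geq 0$, and combining with $\vecv\tran \matA \matH = \zero\tran$ yields $\vecv\tran \matA \matD^k \matH = \lambda_0^k \vecv\tran \matA \matH = \zero\tran$ for all $k \geq 0$.

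Stacking these identities over $k = 0, 1, \ldots, K-1$ shows $\vecv\tran \matA \tilde{\matH}_{(K)} = \zero\tran$ for every finite $K$. Thus $\matA \tilde{\matH}_{(K)}$ has a nonzero vector in its left null space, so its rank is strictly less than $m$ for every $K$. Since $\matA \tilde{\matH}_{(K)}$ is a real matrix, the existence of a complex left null vector precludes full real row rank as well, so the Kalman-type output controllability test fails and the system is not output controllable. This contradicts the hypothesis of output controllability, completing the contrapositive argument.

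The argument is essentially a transcription of the classical PBH necessity proof with $\matA$ inserted on the left; the only point that requires a moment of care is that the eigenvalue $\lambda_0$ may be complex while $\matA$, $\matD$, $\matH$ are real, but this is handled simply by noting that the real rank of $\matA\tilde{\matH}_{(K)}$ equals its complex rank. I do not expect any step to be a genuine obstacle, and sufficiency is not claimed in the proposition, so no converse argument is needed.
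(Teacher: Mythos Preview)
Your proof is correct and follows essentially the same approach as the paper: both argue by contradiction/contrapositive, pick a nonzero left null vector $\vecz$ (your $\vecv$) of $\matA\begin{bmatrix}\lambda\eye-\matD & \matH\end{bmatrix}$, derive $\vecz\tran\matA\matD=\lambda\vecz\tran\matA$ and $\vecz\tran\matA\matH=\zero$, and conclude $\vecz\tran\matA\tilde{\matH}_{(K)}=\zero$ so the Kalman test fails. Your write-up simply spells out the iteration and the real-versus-complex rank point more explicitly than the paper does.
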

\begin{proof}
Our proof is by contradiction. Suppose that, for some $\lambda\in\bbC$, the matrix $\matA\begin{bmatrix}
\lambda\eye - \matD & \matH
\end{bmatrix}$ does not have full row rank. Then, there exists a $\zero \neq \vecz\in\bbC^m$ such that 
\begin{equation}
\vecz\tran\matA\matD = \lambda\vecz\tran\matA
\text{ and }\vecz\tran\matA\matH = \zero,
\end{equation}
which implies $\vecz\tran\matA\tilde{\matH}_{(K)} = \zero$ for all $K$.
Hence, the Kalman test is violated, and the system is not output controllable.
\end{proof}
\begin{exam}
Let $m=3$, $N=5$ and $L=3$, and suppose the system  given by \eqref{eq:sys} and \eqref{eq:sys_2} is defined by the following matrices:
\begin{align}
\matD &= \begin{bmatrix}
	1 & 2 & 4 & 5 & 9\\
    7 & 2 & 3 & 1 & 7\\
    0 & 0 & 1 & 2 & 5\\
    0 & 0 & 3 & 4 & 7\\
    0 & 0 & 1 & 6 & 9
\end{bmatrix}, 
\matH = \begin{bmatrix}
	 1\\
     2\\
     0\\
     0\\
     0
\end{bmatrix}, \\
\matA &= \begin{bmatrix}
0   & 0.019 &   -0.14 &    0.02 &    0.99\\
0   & -0.08 &    0.24 &   0.97 &   0.018\\
1   &    0   &     0   &      0 &        0
\end{bmatrix}.
\end{align}
It can be verified that the system fails the Kalman test, as $\rank{\matA\tilde{\matH}_{(K)}}<m$ for all $K$. However, the condition of \Cref{prop:classical_out} is satisfied. Thus, the condition is necessary but not sufficient for output controllability.
\end{exam}

Our extension of \Cref{thm:necessary_sufficient} to  output sparse-controllability is as follows:

\begin{cor}\label{cor:necessary_sufficient_out}
The system given by \eqref{eq:sys} and \eqref{eq:sys_2} is output $s$-sparse-controllable
only if $s\geq m-\rank{\matA\matD}$, and for all $\lambda\in\bbC$, the rank of $\matA\begin{bmatrix}
\lambda\eye-\matD & \matH
\end{bmatrix}\in\bbR^{m\times (N+L)}$ is~$m$. 
\end{cor}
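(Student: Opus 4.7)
The plan is to prove each of the two claimed necessary conditions separately, mirroring the reasoning that motivates \Cref{thm:necessary_sufficient}. Both parts work directly from the Kalman-type submatrix characterization of output sparse-controllability, obtained by asking when
\[
\matA\begin{bmatrix}
\matD^{K-1}\matH_{\calS_1} & \matD^{K-2}\matH_{\calS_2} & \ldots & \matH_{\calS_{K}}
\end{bmatrix}
\]
can attain row rank $m$ for some index sets $\lc\calS_i\rc_{i=1}^{K}$ of size $s$.

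For the rank condition, I would first observe that output $s$-sparse-controllability trivially implies ordinary output controllability of the system \eqref{eq:sys}--\eqref{eq:sys_2}, because dropping the sparsity constraint only enlarges the set of admissible input sequences and hence the set of reachable outputs. Then \Cref{prop:classical_out} is directly applicable and gives $\rank{\matA\begin{bmatrix}\lambda\eye-\matD & \matH\end{bmatrix}} = m$ for every $\lambda\in\bbC$, which is the first claim.

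For the inequality $s\geq m-\rank{\matA\matD}$, I would adapt the column-space argument used in \Cref{sec:control} just before \Cref{thm:necessary_sufficient}, now applied after left-multiplication by $\matA$. Every column of $\matA\tilde{\matH}_{(K)}$ other than those in the final block $\matA\matH$ is of the form $\matA\matD^{j}\matH_{i}$ with $j\geq 1$, so it factors through $\matD$ and lies in $\calCS\lc\matA\matD\rc$. Consequently, for the Kalman-type submatrix to span $\bbR^{m}$, the columns of $\matA\matH_{\calS_{K}}$ must complete $\calCS\lc\matA\matD\rc$ to all of $\bbR^{m}$, i.e.,
\[
\rank{\begin{bmatrix} \matA\matD & \matA\matH_{\calS_{K}} \end{bmatrix}} = m.
\]
Since $\lv\calS_{K}\rv = s$, the rank of this matrix is at most $\rank{\matA\matD}+s$, which yields the desired lower bound on $s$.

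The main conceptual point, rather than any technical obstacle, is that I would not try to promote the corollary to a necessary \emph{and} sufficient statement. The example preceding the corollary shows that the PBH-type rank condition is already not sufficient for ordinary output controllability, so a reverse implication cannot be obtained by the same route as in \Cref{thm:necessary_sufficient}. Both parts of the argument above are purely direct, consisting of a reduction to \Cref{prop:classical_out} and a one-line column-space computation, so no further machinery should be required.
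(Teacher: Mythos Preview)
Your proposal is correct and follows essentially the same route as the paper. The paper's proof is a one-line pointer: it says to rerun the necessity direction of \Cref{thm:necessary_sufficient} with the left vector $\vecz\tran$ replaced by $\vecz\tran\matA$, which amounts exactly to your two steps---invoking \Cref{prop:classical_out} for the rank condition, and observing that all but the last block of $\matA\tilde{\matH}_{(K)}$ lies in $\calCS\lc\matA\matD\rc$ for the inequality. Your column-space phrasing of the second step is simply the primal view of the paper's left-null-vector phrasing; the content is identical.
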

\begin{proof}
The proof is similar to that of \Cref{thm:necessary_sufficient} in \Cref{app:necessary_sufficient}. We replace $\vecz$ in the last part of the proof with $\matA\vecz$ to show the necessity of the above conditions.
\end{proof}

\Cref{cor:necessary_sufficient_out} is the same as \Cref{thm:necessary_sufficient}, except for a pre-multiplication with $\matA$. We make the following observations:
\begin{itemize}
\item We note that $\rank{\matA\matH^*}\leq \rank{\matA}$ for any matrix $\matH^*$. Hence, if  $\rank{\matA}<m$, the Kalman test 
fails and the system is not output sparse-controllable. 
\item 
 Suppose $\rank{\matA}=m$ for an $s$-sparse-controllable system. Invoking Sylvester's rank inequality~\cite{Hohn_Elementary_2013}, we get
\begin{multline}
m=\rank{\matA}+\rank{\matH^*}-N \\\leq \rank{\matA\matH^*}\leq \rank{\matA}=m,
\end{multline}
where $\matH^*\in\bbR^{N\times Ks}$ is the submatrix of $\tilde{\matH}_{(K)}$ that satisfies the Kalman test for state sparse-controllability, for some finite $K$. Hence, the system is output $s$-sparse-controllable. Therefore, when $\rank{\matA}=m$, the conditions in \Cref{cor:necessary_sufficient_out} are less restrictive than those in \Cref{thm:necessary_sufficient}, as the output dimension $m\leq N$.
\end{itemize}
\begin{exam}
Let $m=2$, $N=3$ and $L=2$, and suppose the system  given by \eqref{eq:sys} and \eqref{eq:sys_2} is defined by the following matrices:
\begin{equation}
\matD=\begin{bmatrix}
1&0&0\\
0&0&0\\
0&0&0
\end{bmatrix},\hspace{0.3cm} \matH=\begin{bmatrix}
1  &   1 \\
1  &   0\\
0  &  1
\end{bmatrix} \text{ and } 
\matA = \begin{bmatrix}
1   & 0 &  0\\
0   & 1 &    0 
\end{bmatrix}.
\end{equation}
It can be verified that the system is not $1-$sparse-controllable, but the system is output $1-$sparse-controllable.
\end{exam}

}
\subsection{Inputs with Common Support}\label{sec:mmv_sparse}
We recall the minimal input selection problem discussed in \Cref{sec:intro}. For such a problem, the system is controlled using sparse inputs with a common support, i.e., when the indices of the nonzero entries of all the inputs coincide. In this case, the effective system has the transfer matrix-input matrix pair as $(\matD,\matH_{\calS})$ for some index set $\calS$ such that $\lv\calS\rv=s$. Hence, the controllability conditions are given as follows:
\begin{enumerate}[label=(\roman*)]
\item For some finite $K$, there exists a $N\times Ks$ submatrix $\begin{bmatrix}
\matD^{K-1}\matH_{\calS} \  \matD^{K-2}\matH_{\calS} \  \ldots \  \matH_{\calS}
\end{bmatrix}$ of $\tilde{\matH}_{(K)}$ with rank $N$, 
 where $\calS\subseteq[L]$ and $\lv\calS\rv=s$.

\item For all $\lambda\in\bbC$, rank of $\begin{bmatrix}
\lambda\eye - \matD & \matH_{\calS}
\end{bmatrix}\in\bbR^{N\times (N+s)}$ is $N$, for some $\calS\subseteq\lc 1,2,\ldots,L\rc$ such that $\lv\calS\rv=s$. \label{con:static_support}
\end{enumerate}
{Clearly, \ref{con:static_support} above implies the two conditions of \Cref{thm:necessary_sufficient}. Therefore,}  the above conditions are more stringent than those in \Cref{thm:necessary_sufficient}, which is expected due to the additional requirement of using a common support. Thus, a system with sparse inputs with time-varying support offers greater flexibility and control, and incurs a similar communication cost,\footnote{The communication cost remains of order $s$, since the support can be conveyed using $s\log(L)$ bits.} compared to a system that uses sparse inputs with a common support. 

{
From the PBH-type condition, $s$-sparse-controllability with a common support holds only if 
\begin{equation}\label{eq:rank_bound_static}
\min\lc R_{\matH},s\rc\geq g_{\matD}\geq N-R_{\matD},
\end{equation}
where $g_{\matD}$ is the largest geometric multiplicity of an eigenvalue of $\matD$.
\subsection{Illustrative Examples}\label{sec:exam_1}
We first give an example to demonstrate that a controllable system which does not satisfy the inequality condition of \Cref{thm:necessary_sufficient} is not sparse-controllable.
\begin{exam}
Consider a linear system with $N=3$, $L=2$,
\begin{equation}
\matD=\begin{bmatrix}
1&0&0\\
0&0&0\\
0&0&0
\end{bmatrix} \text{, and } \matH=\begin{bmatrix}
1  &   1 \\
1  &   0\\
0  &  1
\end{bmatrix}.
\end{equation} 
Using the PBH test, it is easy to see that the system is controllable. However, the system does not satisfy the conditions of \Cref{thm:necessary_sufficient}. 

We verify that the system is not $1-$sparse-controllable using the initial state $\vecx_0=\zero$ and final state $\vecx_f=\begin{bmatrix}
1& 1& 1
\end{bmatrix}\tran$. From \eqref{eq:sys_concat}, we have,
\begin{equation}
\begin{bmatrix}
1\\
1\\
1
\end{bmatrix} = \sum_{k=1}^K\matD^{K-k}\matH\vech_k = \begin{bmatrix}
\sum_{k=1}^K\vech_k[1]+\vech_k[2]\\
\vech_K[1]\\
\vech_K[2]
\end{bmatrix}.
\end{equation}
Since $\vech_K$ is $1-$sparse, the above system of equations does not have any solution, for any finite value of $K$. Thus, the system is not $1-$sparse-controllable.
\end{exam}

Next example illustrates the benefits of using sparse control in a linear system over the sparse control with common support discussed in \Cref{sec:mmv_sparse}.

\begin{exam}
Consider a linear system with $N=3$, $L=3$,
\begin{equation}
\matD=\begin{bmatrix}
1&0&0\\
0&0&0\\
0&0&-1
\end{bmatrix} \text{, and } \matH=\begin{bmatrix}
0&1&0 \\
0&0&1\\
1&0&0
\end{bmatrix}.
\end{equation} 
This system satisfies the conditions in \Cref{thm:necessary_sufficient} for $s=2$, and is hence $2-$sparse-controllable. There are three possible unconstrained systems with input matrices of size $3\times 2$:
\begin{equation*}
\matH_{(1)}=\begin{bmatrix}
0&1 \\
0&0\\
1&0
\end{bmatrix} \hspace{0.5cm} \matH_{(2)}=\begin{bmatrix}
1&0 \\
0&1\\
0&0
\end{bmatrix} \hspace{0.5cm} \matH_{(3)}=\begin{bmatrix}
0&0 \\
0&1\\
1&0
\end{bmatrix}.
\end{equation*}
 However, the three subsystems described by the matrix pair $(\matD,\matH_{(k)})$ for $k=1,2,3$ are individually uncontrollable. Hence, sparse control allows the system to be controllable without adding much communication burden.
\end{exam} 

Finally, we give an example of a system with non-invertible $\matD$ which is both controllable and sparse-controllable. This example shows that the condition $\matD$ is invertible is not  necessary, but sufficient for a  controllable system to be sparse-controllable.

\begin{exam}
Consider a linear system with $N=3$, $L=2$,
\begin{equation}
\matD=\begin{bmatrix}
0&1&0\\
0&0&1\\
0&0&0
\end{bmatrix} \text{, and } \matH=\begin{bmatrix}
1  &   1 \\
1  &   0\\
1  &  1
\end{bmatrix}.
\end{equation} 
We note that $\matD$ is not an invertible matrix. Further,  the system satisfies the conditions in \Cref{thm:necessary_sufficient} for $s=1$, and hence it is $1-$sparse-controllable.
\end{exam}

To sum up, in this section, we answered  \ref{q:Q1} in \Cref{sec:intro}, and we address the question \ref{q:Q2}  in the next section.
}
\section{Minimum Number of Control Input Vectors} 
\label{sec:length}
In this section, we bound  the minimum number of input vectors that are required to drive the system from any given state to any final state. For comparison, we first state the corresponding result for the unconstrained system. In this section,  $q$ denotes the degree of the minimal polynomial of~$\matD$.
\begin{theorem}\label{thm:num_inp_uncon}
For a controllable system,  the minimum number of input vectors $K$ required to steer the system from any given state to any other state satisfies
\begin{equation}\label{eq:min_inp_uncon}
N/R_{\matH}\leq K \leq\min\lc q, N-R_{\matH}+1\rc\leq N.
\end{equation}
\end{theorem}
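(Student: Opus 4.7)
The plan is to work with the controllability matrix $\tilde{\matH}_{(K)}$ directly. From \eqref{eq:sys_concat}, the system can be driven between any pair of states using $K$ inputs if and only if $\tilde{\matH}_{(K)}$ has rank $N$. Thus the minimum number of inputs is the smallest $K$ for which $\rank{\tilde{\matH}_{(K)}}=N$, and the entire proof reduces to bounding this integer from above and below.

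For the lower bound, I would use the block structure: $\tilde{\matH}_{(K)}$ consists of $K$ blocks of the form $\matD^{K-i}\matH$, each of rank at most $R_{\matH}$. Subadditivity of rank gives $\rank{\tilde{\matH}_{(K)}} \leq K R_{\matH}$, so requiring this quantity to equal $N$ immediately yields $K \geq N/R_{\matH}$.

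For the upper bound, I would track the nested subspace chain $\calV_K := \calCS\lc\tilde{\matH}_{(K)}\rc = \sum_{i=0}^{K-1} \matD^i \calCS\lc\matH\rc$. The key algebraic identity is $\calV_{K+1} = \matD \calV_K + \calCS\lc\matH\rc$, from which a short induction shows that once $\calV_{K+1} = \calV_K$, the chain remains stationary, i.e., $\calV_j = \calV_K$ for all $j \geq K$. Two independent saturation arguments then follow. First, by definition of the minimal polynomial, $\matD^q$ can be written as a linear combination of $\eye,\matD,\dots,\matD^{q-1}$, so $\matD^q \matH$ lies in $\calV_q$ and hence $\calV_{q+1} = \calV_q$; controllability then forces $\dim\calV_q = N$ and yields $K \leq q$. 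Second, since $\dim\calV_1 = R_{\matH}$ and every strict inclusion in the chain increments the dimension by at least one, the chain must saturate to dimension $N$ by $K = N - R_{\matH} + 1$. Combining gives $K \leq \min\lc q, N-R_{\matH}+1\rc$, and the trailing inequality $N-R_{\matH}+1 \leq N$ follows from $R_{\matH}\geq 1$ for any controllable system.

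The main obstacle I expect is making the chain-saturation argument fully rigorous, in particular the claim that once $\calV_{K+1}=\calV_K$ the chain is frozen at $\calV_K$ thereafter. This rests on the recursion $\calV_{K+2} = \matD \calV_{K+1} + \calCS\lc\matH\rc = \matD \calV_K + \calCS\lc\matH\rc = \calV_{K+1}$, followed by induction on $K$. Once that lemma is in place, the two upper bounds and the lower bound are each a one-line consequence, so the rest of the proof is routine rank bookkeeping.
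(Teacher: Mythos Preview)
Your argument is correct and is exactly the standard controllability-subspace chain argument. The paper does not actually prove this statement; it simply cites \cite[Section 6.2.1]{Chen_Linear_1998}, and what you have outlined is essentially the proof found there: the lower bound via subadditivity of rank across the $K$ blocks $\matD^{K-i}\matH$, and the upper bound via the monotone chain $\calV_K=\calCS\lc\tilde{\matH}_{(K)}\rc$ with the recursion $\calV_{K+1}=\matD\calV_K+\calCS\lc\matH\rc$, combined with the two saturation mechanisms (Cayley--Hamilton/minimal polynomial for the $q$ bound, and strict dimension increments starting from $\dim\calV_1=R_{\matH}$ for the $N-R_{\matH}+1$ bound). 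The stabilization lemma you flag as the ``main obstacle'' is exactly the one-line induction you already wrote down, so there is no real obstacle remaining.
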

\begin{proof}
See \cite[Section 6.2.1]{Chen_Linear_1998}.
\end{proof}

We note that when we restrict the admissible inputs to sparse vectors, 
 the minimum number of input vectors  required can increase. This change is captured by the following theorem:
\begin{theorem}\label{thm:num_inp}
For an $s$-sparse-controllable system,  the minimum number of $s$-sparse input vectors $K^*$ required to steer the system from any given state to any other state satisfies
\begin{equation}\label{eq:min_inp}
\frac{N}{R_{\matH,s}^* }\!\leq\! K^*\!\leq\! \min \lc q\left\lceil \frac{ {S^*}}{s}\right\rceil, N-R_{\matH,s}^*+1 \rc \!\leq N,
\end{equation}
where 
{ $R_{\matH,s}^* \triangleq \min\lc R_{\matH},s\rc$ and 
\begin{align}
S^*&\triangleq \min \Big\{T: T=\lv\calS\rv \text{ for }\calS\subseteq [L] \notag\\
 &\hspace{0.8cm}\ld \text{ and }
\rank{\begin{bmatrix}
\matD-\lambda\eye & \matH_{\calS}\end{bmatrix}}=N, \forall \lambda\in\bbC\rc. \nonumber
\end{align}}
\end{theorem}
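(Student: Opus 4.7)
The plan is to prove the lower bound and the two upper bounds separately. For the lower bound $N/R_{\matH,s}^{*} \leq K^{*}$, I would count dimensions: for any supports $\calS_1,\ldots,\calS_K$ of size $s$, the reachable subspace $\sum_{k=1}^K \matD^{K-k}\calCS\lc\matH_{\calS_k}\rc$ has dimension at most $K\cdot \min\{R_{\matH},s\} = K R_{\matH,s}^{*}$, because each summand has dimension at most $\rank{\matH_{\calS_k}} \leq \min\{R_{\matH},s\}$. Since sparse-controllability requires this dimension to reach $N$, we conclude $K R_{\matH,s}^{*}\geq N$.

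For the upper bound $K^{*} \leq q\lceil S^{*}/s\rceil$, I would fix $\calS\subseteq[L]$ of size $S^{*}$ satisfying the PBH rank condition, so that $(\matD,\matH_{\calS})$ is classically controllable in $q$ time steps. Setting $m := \lceil S^{*}/s\rceil$, I partition $\calS = \bigsqcup_{j=1}^{m}\calS^{(j)}$ with each $|\calS^{(j)}| \leq s$, and construct a schedule of $K = qm$ $s$-sparse supports by block-allocating the groups: use $\calS^{(j)}$ for the $q$ time steps of the $j$-th block. With this choice, the contribution of group $j$ to the reachable subspace lies in $\matD^{q(m-j)}\mathcal{N}_j$, where $\mathcal{N}_j := \sum_{i=0}^{q-1}\matD^i \calCS\lc\matH_{\calS^{(j)}}\rc$ is the smallest $\matD$-invariant subspace containing $\calCS\lc\matH_{\calS^{(j)}}\rc$ by Cayley--Hamilton. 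Since $\sum_j \mathcal{N}_j$ is the smallest $\matD$-invariant subspace containing $\calCS\lc\matH_{\calS}\rc$, which equals $\bbR^N$ by the controllability of $(\matD,\matH_{\calS})$, the argument concludes by showing that the time-shifts $\matD^{q(m-j)}$ do not collapse this sum; this is the main technical step.

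For the upper bound $K^{*} \leq N - R_{\matH,s}^{*} + 1$, I would adapt the classical incremental-rank argument: start with a support $\calS_1$ achieving $\rank{\matH_{\calS_1}} = R_{\matH,s}^{*}$, and inductively show that whenever the reachable subspace $\calV$ has dimension strictly less than $N$, one can extend the support sequence (by prepending a size-$s$ support $\tilde\calS$, replacing $\calV$ by $\matD^K\calCS\lc\matH_{\tilde\calS}\rc + \calV$) to strictly increase the dimension by at least one. The extension step invokes the PBH rank condition of \Cref{thm:necessary_sufficient}, which rules out any proper subspace that is simultaneously closed under the action of $\matD$ and contains the column spaces of all size-$s$ selections of $\matH$.

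The main obstacle will be the no-collapse claim in the $q\lceil S^{*}/s\rceil$ bound. For non-invertible $\matD$, the shifts $\matD^{q(m-j)}$ can kill part of $\mathcal{N}_j$, so a naive block schedule may under-cover $\bbR^N$. I expect that the sparse-controllability hypothesis $s + R_{\matD} \geq N$, which tightly bounds the nullity of $\matD$ on each $\mathcal{N}_j$, together with a refined (possibly cyclic or Jordan-adapted) schedule, closes this gap and yields the claimed bound in $qm$ steps.
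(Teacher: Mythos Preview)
Your lower bound argument is correct and matches the paper. Your plan for the $q\lceil S^*/s\rceil$ bound is also the paper's plan, and you correctly flag the ``no-collapse'' issue as the crux; the paper's resolution is precisely along the lines you suggest: it uses $s\geq N-R_{\matD}$ to order the partition so that the \emph{last} block $\calS'_{m}$ satisfies $\rank{\begin{bmatrix}\matD & \matH_{\calS'_m}\end{bmatrix}}=N$, which is what prevents the nilpotent part of $\matD$ from collapsing the final contribution.

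The genuine gap is in your argument for $K^*\leq N-R^*_{\matH,s}+1$. Greedy \emph{prepending} can get stuck even when the system is $s$-sparse-controllable, because the PBH obstruction you invoke does not apply: the running reachable set $\calV$ is in general not $\matD$-invariant, and even when it is, prepending adds $\matD^{K}\calCS\{\matH_{\tilde\calS}\}$ rather than $\calCS\{\matH_{\tilde\calS}\}$ itself. Concretely, take $N=3$, $s=1$,
\[
\matD=\begin{bmatrix}0&1&0\\0&0&0\\0&0&1\end{bmatrix},\qquad \matH=\begin{bmatrix}0&0\\1&0\\0&1\end{bmatrix}.
\]
This system is $1$-sparse-controllable and the bound predicts $K^*\leq 3$. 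If you start with $\calS_1=\{2\}$ (which attains $R^*_{\matH,s}=1$) and prepend greedily, the only extending choice at step two gives $\calV_2=\mathrm{span}\{e_1,e_3\}$; but then $\matD^{2}\matH_{\{1\}}=\zero$ and $\matD^{2}\matH_{\{2\}}=e_3\in\calV_2$, so no prepend increases the dimension and the procedure stalls at rank $2$. (A different length-$3$ schedule, $\{2\},\{1\},\{1\}$, does reach rank $3$, so the bound is correct; it is the greedy mechanism that fails.)

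The paper avoids this by never committing to a particular schedule. It defines $R^*_{(K)}$ as the maximum rank over \emph{all} admissible length-$K$ schedules, proves a replacement lemma (any length-$K$ schedule can be upgraded, by swapping only linearly dependent columns within the same power of $\matD$, to one achieving $R^*_{(K)}$), and uses it to show a stagnation principle: if $R^*_{(K)}=R^*_{(K+1)}$ then $R^*_{(K+Q)}=R^*_{(K)}$ for all $Q\geq 1$. Since sparse-controllability forces $R^*_{(K')}=N$ for some finite $K'$, the sequence $R^*_{(K)}$ must strictly increase from $R^*_{(1)}=R^*_{\matH,s}$ until it hits $N$, giving $K^*\leq N-R^*_{\matH,s}+1$. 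To salvage your approach you would need exactly this: allow the full schedule to change at each step (not just prepend), and prove that the \emph{maximum} reachable dimension strictly increases whenever it is below $N$.
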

\begin{proof}
See \Cref{app:num_inp}. 
\end{proof}
{The above result can be intuitively explained as follows: At each time instant, we use at most $s$ linearly independent columns of $\matH$ to drive the system. Therefore, $R_{\matH}$ is replaced with $R_{\matH,s}^*$. Also, the first term  of the upper bound is computed by mapping the system to the reduced controllable system $(\matD,\matH_{\calS^*})$.  The reduced system retains the least number of columns of $\matH$ that are necessary to ensure controllability. Thus, under sparse inputs, we need $\left\lceil\lv\calS^*\rv/s\right\rceil$ times larger number of inputs compared to an unconstrained system.}
We make the following further observations from \Cref{thm:num_inp}.
\begin{itemize}
\item {Using the fact that that $S^*\leq R_{\matH}$, and from \Cref{thm:necessary_sufficient} which implies $R^*_{\matH,s} \geq \max\lc N-R_{\matD},1\rc$, we can get a relaxed bound instead of \eqref{eq:min_inp} as follows:
\begin{equation*}\label{eq:min_inp_relax}
\frac{N}{\min \lc R_{\matH},s\rc} \leq K^*\leq \min \lc q\left\lceil \frac{R_{\matH}}{s}\right\rceil, R_{\matD}+1, N \rc.
\end{equation*}}
\item 
The bound is invariant under right or left multiplication of $\matH$ by a non-singular matrix, and under any similarity transform on $\matD$. 
\item As $s$ increases, the system has more flexibility, and thus requires fewer number of input vectors to ensure controllability. Hence, the bounds are non-increasing in~$s$.
\item  {The upper and lower bounds in \Cref{thm:num_inp} meet when $N/R^*_{\matH,s}=N-R^*_{\matH,s}+1$, which gives $R^*_{\matH,s}$ as 1 or $N$. Similarly, for $s=1$, the lower and upper bounds in \Cref{thm:num_inp} are equal, and $K^*=N$. Further, if $R_{\matH}\geq s$, we get $R^*_{\matH,s}=s$, and thus the bounds are equal when $s=N$.}
\item We consider three cases for comparison with \Cref{thm:num_inp_uncon}:
\begin{enumerate}[label=(\alph*), leftmargin=0.3cm]
\item When $s=L$, which corresponds to the unconstrained case, \Cref{thm:num_inp} reduces to \Cref{thm:num_inp_uncon}, as expected. 

\item When {$s\geq S^*\geq R_{\matH}$}, \Cref{thm:num_inp} reduces to \Cref{thm:num_inp_uncon}, as $R^*_{\matH,s}=R_{\matH}$. This follows because when $s \ge R_{\matH}$, $\calCS\{\tilde{\matH}_{(K)}\}$ is the same as the column space of an $N\times Ks$ submatrix of $\tilde{\matH}_{(K)}$ with maximum rank. 

\item When $\min\lc q, N-\matR_{\matH}+1 \rc=N$, the system requires the same number of inputs to achieve controllability and $s$-sparse-controllability, for any $s$. However, this is possible only if $R_{\matH}=1$. When $s\geq R_{\matH}$, the system is equivalent to an unconstrained system as discussed above.
\end{enumerate} 

\end{itemize}
{
The following interesting corollary 
bounds the  number of $s$-sparse input vectors that ensures output controllability:
\begin{cor}\label{cor:num_inp_out}
For an output $s$-sparse-controllable system,  the minimum number of input vectors $K^*$ required to steer any initial output to any final output satisfies
\begin{multline}
\frac{m}{R_{\matA\matH,s}^*} \!\leq\!  K^* 
\!\leq\!  \min \lc q\left\lceil \frac{R_{\matH}}{s}\right\rceil, m-R_{\matA\matH,s}^*+1 \rc \!\leq\! m, \nonumber
\end{multline}
where $R_{\matA\matH,s}^*=\min\lc \rank{\matA\matH}, s\rc$.
\end{cor}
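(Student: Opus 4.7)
The plan is to adapt the proof of \Cref{thm:num_inp} to the output setting by replacing the effective controllability matrix $\tilde{\matH}_{(K)}$ with $\matA\tilde{\matH}_{(K)}$ throughout, and the ambient dimension $N$ with the output dimension $m$. By \eqref{eq:sys_concat_out}, output $s$-sparse-controllability at horizon $K$ is equivalent to the existence of index sets $\calS_1,\ldots,\calS_K\subseteq[L]$ with $\lv\calS_k\rv=s$ such that the $m\times Ks$ matrix $\matA\begin{bmatrix}\matD^{K-1}\matH_{\calS_1} & \cdots & \matH_{\calS_K}\end{bmatrix}$ has rank $m$; both bounds in \Cref{cor:num_inp_out} are then obtained by counting how much rank each time step contributes to this matrix.

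For the lower bound, I would argue that each block $\matA\matD^{K-k}\matH_{\calS_k}$ has rank at most $R^*_{\matA\matH,s}$, so summing the $K$ per-block contributions to reach $m$ yields $K^*\geq m/R^*_{\matA\matH,s}$. For the upper bound I proceed in two parts. To obtain $K^*\leq m - R^*_{\matA\matH,s}+1$, I use the same incremental rank argument as in \Cref{thm:num_inp}: the first input vector produces rank at least $R^*_{\matA\matH,s}$ in $\matA\tilde{\matH}_{(K)}$, and each additional $s$-sparse input adds at least one more dimension until the rank saturates at $m$. To obtain $K^*\leq q\lceil R_\matH/s\rceil$, I invoke Cayley--Hamilton to restrict attention to the $q$ distinct powers $\matD^0,\ldots,\matD^{q-1}$, within each of which at most $\lceil R_\matH/s\rceil$ repetitions suffice to exhaust the column space of $\matA\matD^j\matH$. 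The envelope $\leq m$ then follows because output sparse-controllability forces $R^*_{\matA\matH,s}\geq 1$.

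The main obstacle, I expect, is justifying the block-wise rank bound $\rank{\matA\matD^{K-k}\matH_{\calS_k}}\leq R^*_{\matA\matH,s}$ used in the lower bound. In the state version of \Cref{thm:num_inp} the analogous bound is immediate from $\rank{\matD^j\matX}\leq\rank{\matX}$, but after pre-multiplying by $\matA$ the composition $\matA\matD^j$ can in general enlarge the rank beyond $\rank{\matA\matH}$. Closing this gap is the delicate step: I would attempt it by combining a Cayley--Hamilton reduction with the PBH-type rank condition of \Cref{cor:necessary_sufficient_out}, which constrains how the range of each $\matA\matD^j\matH$ relates to that of $\matA\matH$.
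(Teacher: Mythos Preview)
The paper does not prove \Cref{cor:num_inp_out} separately; it is stated immediately after \Cref{thm:num_inp} as a corollary, with the implicit argument being precisely the substitution you propose---replace $\tilde{\matH}_{(K)}$ by $\matA\tilde{\matH}_{(K)}$ and $N$ by $m$ throughout \Cref{app:num_inp}. So your plan coincides with the paper's intention.

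The obstacle you isolate in the lower bound is genuine, however, and the paper does not address it either. The block estimate $\rank{\matA\matD^{j}\matH_{\calS}}\leq\rank{\matA\matH}$ is false in general: take $\matA$ projecting onto coordinates that $\matH$ misses but that $\matD^{j}\matH$ hits, and the rank strictly increases. Neither Cayley--Hamilton nor the merely necessary condition of \Cref{cor:necessary_sufficient_out} forces $\calCS\lc\matA\matD^{j}\matH\rc\subseteq\calCS\lc\matA\matH\rc$, so your proposed fix does not close the gap. You should also note a parallel issue in the bound $K^*\leq m-R^*_{\matA\matH,s}+1$: in \Cref{app:num_inp} this comes from the saturation property (Parts~B--C) that $R^*_{(K)}=R^*_{(K+1)}$ implies $R^*_{(K+Q)}=R^*_{(K)}$, whose proof relies on left-multiplying a column-space inclusion by $\matD$. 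Once $\matA$ sits in front this move is unavailable, and one can build small examples (e.g.\ $m=2$, $N=3$, $L=s=1$, $\matH=\vece_1$, $\matD$ sending $\vece_1\mapsto \vece_3\mapsto \vece_2\mapsto \zero$, and $\matA$ projecting onto the first two coordinates) in which $\rank{\matA\tilde{\matH}_{(K)}}$ stagnates at $K=2$ and then grows at $K=3$, so the strictly-increasing-until-saturation conclusion fails in the output setting. Of the three inequalities, the Cayley--Hamilton bound $K^*\leq q\lceil R_{\matH}/s\rceil$ is the one whose argument most directly survives the passage to outputs, since the minimal-polynomial reduction acts on $\matD$ before $\matA$ is applied.
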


The bounds in \Cref{cor:num_inp_out} are smaller than those in \Cref{thm:num_inp}, because the dimension of the output space, $m$, is smaller than that of the state space, $N$. 
Further, substituting $s=L$ in \Cref{cor:num_inp_out}, we  see that for an output controllable system,  the minimum number of input vectors $K$ required to steer any initial output to any final output satisfies
\begin{equation*}
\frac{m}{ \rank{\matA\matH}} \leq  K 
\leq\min \lc q, m-\rank{\matA\matH}+1 \rc \leq m.
\end{equation*}
Similarly, we can extend \Cref{thm:num_inp} to the common support case discussed in \Cref{sec:mmv_sparse}:
\begin{cor}\label{cor:num_inp_stat}
For a system that is controllable using $s$-sparse inputs with a common support,  if $R_{\matH,s}^*=\min\lc R_{\matH},s\rc$, the minimum number of input vectors $K^*$ required to steer any initial output to any final output satisfies
\begin{equation}\label{eq:min_inp_stat}
\frac{N}{R_{\matH,s}^*} \leq  K^* 
\leq  \min \lc q, N-R_{\matH,s}^*+1 \rc \leq N.
\end{equation}
\end{cor}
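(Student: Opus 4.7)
The plan is to reduce the sparse-common-support controllability problem to the classical unconstrained controllability problem analyzed in \Cref{thm:num_inp_uncon}, and then transfer the bounds directly. Specifically, since every admissible input sequence has its nonzero entries confined to a single fixed index set $\calS\subseteq[L]$ with $\lv\calS\rv=s$, controlling the original system with common-support $s$-sparse inputs is equivalent to controlling the reduced linear system $\vecx_k=\matD\vecx_{k-1}+\matH_{\calS}\tilde{\vech}_k$ with \emph{unconstrained} inputs $\tilde{\vech}_k\in\bbR^s$. Thus $K^*$ equals the minimum, over all index sets $\calS$ of size $s$ that render $(\matD,\matH_{\calS})$ controllable, of the minimum number of unconstrained inputs needed to drive $(\matD,\matH_{\calS})$ between arbitrary states.

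First I would apply \Cref{thm:num_inp_uncon} to each such reduced pair $(\matD,\matH_{\calS})$. Since the minimal polynomial of $\matD$ has degree $q$ irrespective of the input matrix, the theorem gives, for every valid $\calS$,
\begin{equation*}
\frac{N}{\rank{\matH_{\calS}}}\leq K^*(\calS)\leq \min\lc q,\, N-\rank{\matH_{\calS}}+1\rc.
\end{equation*}
Next I would take the minimum over $\calS$: the upper bound is minimized by \emph{maximizing} $\rank{\matH_{\calS}}$ among valid choices, while the lower bound holds verbatim for every valid $\calS$ and hence for the minimizer as well.

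To conclude I would invoke the hypothesis $R_{\matH,s}^{*}=\min\lc R_{\matH},s\rc$, interpreted (in the common-support setting) as the statement that the maximum of $\rank{\matH_{\calS}}$ over size-$s$ index sets $\calS$ making $(\matD,\matH_{\calS})$ controllable is exactly $\min\lc R_{\matH},s\rc$. Substituting this value of $\rank{\matH_{\calS}}$ into the upper bound gives $\min\lc q, N-R_{\matH,s}^{*}+1\rc$. For the lower bound, $\rank{\matH_{\calS}}\leq \min\lc R_{\matH},s\rc=R_{\matH,s}^{*}$ for every $\calS$ with $\lv\calS\rv=s$, so $K^*\geq N/\rank{\matH_{\calS}}\geq N/R_{\matH,s}^{*}$. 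The final inequality $\min\lc q, N-R_{\matH,s}^{*}+1\rc\leq N$ follows from $q\leq N$.

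\textbf{Main obstacle.} The only nontrivial point is verifying that the maximum of $\rank{\matH_{\calS}}$ over \emph{controllability-preserving} index sets $\calS$ of size $s$ is indeed $\min\lc R_{\matH},s\rc$, as opposed to a smaller number. This is exactly what the stated hypothesis rules out; without it, an index set of size $s$ that achieves the maximal column rank $\min\lc R_{\matH},s\rc$ might fail the PBH-type condition \ref{con:static_support} in \Cref{sec:mmv_sparse}. Once this hypothesis is in place, the reduction to \Cref{thm:num_inp_uncon} is essentially mechanical and the remaining manipulations are routine.
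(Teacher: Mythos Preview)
Your approach is essentially the same as the paper's: both reduce the common-support problem to the unconstrained problem on the reduced pair $(\matD,\matH_{\calS})$ for some size-$s$ index set $\calS$ that renders it controllable, and then invoke \Cref{thm:num_inp_uncon}. The paper's proof is a two-line sketch stating exactly this reduction; you have simply fleshed out the bookkeeping and flagged the one delicate point (that a valid $\calS$ attaining $\rank{\matH_{\calS}}=\min\{R_{\matH},s\}$ is needed for the second term of the upper bound), which the paper leaves implicit.
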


\begin{proof}
The proof follows from \Cref{thm:num_inp_uncon} and the fact that there exists an index set $\calS\subseteq[L]$ such that $\lv\calS\rv=s$ and the system defined by $\lb\matD,\matH_{\calS}\rb$ is  controllable.
\end{proof}}

\section{Decomposing Sparse-controllable States}\label{sec:decomposition}
In this section, we consider \ref{q:Q3} in \Cref{sec:intro}, and present a  decomposition of the state space into sparse-controllable, sparse-uncontrollable and uncontrollable subspaces. We begin with the observation that $s$-sparse-controllability inherits  the \emph{invariance under a change of basis} property of the conventional controllability as discussed in the proposition below. 
\begin{prop}[Invariance under change of basis] The system defined by the matrix pair $(\matD, \matH)$ is $s$-sparse-controllable if and only if the system defined by $(\matU^{-1}\matD\matU, \matU^{-1}\matH)$ is $s$-sparse-controllable for every nonsingular $\matU\in\bbR^{N\times N}$.
\end{prop}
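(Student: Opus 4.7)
The plan is to give a standard change-of-variables argument, noting carefully that the transformation $\matU$ acts only on the state space while the sparsity constraint lives in the input space. Given any nonsingular $\matU$, define $\vecz_k \triangleq \matU^{-1}\vecx_k$. Substituting $\vecx_k = \matU\vecz_k$ into \eqref{eq:sys} and left-multiplying by $\matU^{-1}$ yields the equivalent recursion
\begin{equation}
\vecz_k = (\matU^{-1}\matD\matU)\vecz_{k-1} + (\matU^{-1}\matH)\vech_k. \nonumber
\end{equation}
Crucially, the input $\vech_k$ is untouched, so the constraint $\lV\vech_k\rV_0 \le s$ carries over verbatim.

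Next I would exploit the bijection $\vecx \leftrightarrow \matU^{-1}\vecx$ on $\bbR^N$ induced by nonsingularity of $\matU$. For the forward implication, given any initial/final pair $(\vecz_{\text{init}},\vecz_{\text{final}})$ of the transformed system, the pair $(\matU\vecz_{\text{init}},\matU\vecz_{\text{final}})$ is a legitimate initial/final pair for $(\matD,\matH)$; by $s$-sparse-controllability of the original system there exists an $s$-sparse sequence $\lc\vech_k\rc$ steering it between these two states, and by the equivalence of the two recursions that very same sequence drives the transformed system between $\vecz_{\text{init}}$ and $\vecz_{\text{final}}$. The reverse implication is identical, with the roles of $\vecx$ and $\vecz$ swapped.

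If preferred, one can alternatively invoke \Cref{thm:necessary_sufficient} directly and observe the block identity
\begin{equation}
\begin{bmatrix} \lambda\eye - \matU^{-1}\matD\matU & \matU^{-1}\matH \end{bmatrix} = \matU^{-1}\begin{bmatrix} \lambda\eye - \matD & \matH \end{bmatrix}\begin{bmatrix} \matU & \zero \\ \zero & \eye \end{bmatrix}, \nonumber
\end{equation}
whose outer factors are nonsingular; hence the rank is preserved for every $\lambda\in\bbC$. Combined with $R_{\matU^{-1}\matD\matU}=R_{\matD}$ (similarity preserves rank), both the rank condition and the inequality condition of \Cref{thm:necessary_sufficient} transfer across the change of basis. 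There is no real obstacle here, the only subtle point to flag is that the argument would fail if $\matU$ were allowed to mix states with inputs; it works precisely because the transformation is on the state space alone, leaving the sparsity pattern in the input space intact.
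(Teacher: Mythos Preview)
Your proposal is correct. The paper, however, argues via the Kalman-type rank test rather than either of your two routes: it observes that replacing $(\matD,\matH)$ by $(\matU^{-1}\matD\matU,\matU^{-1}\matH)$ in \eqref{eq:control_mat_defn} sends $\tilde{\matH}_{(K)}$ to $\matU^{-1}\tilde{\matH}_{(K)}$, and since left-multiplication by the invertible $\matU^{-1}$ preserves the rank of every submatrix of the form \eqref{eq:submatrix_form}, the Kalman-type test transfers verbatim. Your first argument is the most elementary of the three, working directly from \Cref{def:sparsecontrol} without any rank test at all; your second argument via \Cref{thm:necessary_sufficient} is tidy but leans on the paper's main theorem, which is heavier machinery than needed here. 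The paper's approach sits in between: it is an algebraic rank argument like your second route, but it appeals only to the definition-level Kalman-type criterion rather than to the PBH-type characterization.
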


\begin{proof}
We note that when $\matD$ and $\matH$ are replaced with $\matU^{-1}\matD\matU$ and $\matU^{-1}\matH$ respectively, in \eqref{eq:control_mat_defn}, we get $\matU^{-1}\tilde{\matH}_{(K)}$ instead of $\tilde{\matH}_{(K)}$. Now,
the result follows from the Kalman-type rank test and the fact that the rank of every submatrix of $\tilde{\matH}_{(K)}$ and $\matU^{-1}\tilde{\matH}_{(K)}$ are the same.
\end{proof}

Inspired by the above proposition and in the same spirit as the Kalman decomposition~\cite{Kalman_Mathematical_1963}, we transform the original system to an equivalent \emph{standard form} using a change of basis, such that the   transformed state-space is separated into an $s$-sparse-controllable subspace and an orthogonal $s$-sparse-uncontrollable subspace. 
To this end, we first separate the controllable and uncontrollable states using the Kalman decomposition. Next, we identify the sparse-controllable part of the controllable part, for which we use the inequality condition of \Cref{thm:necessary_sufficient}. 
For this, we find a basis for the controllable part such that the transformed state-space separates into two subsystems: one which satisfies the inequality condition, and the other which does not. 
We now formally present the procedure for the decomposition, followed by an explanation of why the procedure works.
{
\begin{enumerate}[label=\arabic*.,leftmargin=0.3cm]
\item Find a basis for $\calCS\lc\tilde{\matH}_{(N)}\rc$ as $\lc\vecu_i\rc_{i=1}^{R}$, where $R\leq N$ is the rank of $\tilde{\matH}_{(N)}$. Extend the basis by adding $N-R$ linearly independent vectors $\lc\vecu_i\rc_{i=R+1}^N$ to define the invertible matrix 
$\matU\triangleq\begin{bmatrix}
\vecu_1 \ \vecu_{2}, \ldots ,\vecu_{N}
\end{bmatrix}\in\bbR^{N\times N}.$

\item Compute $\check{\matD} = \matU^{-1}\matD\matU\in\bbR^{N\times N}$ and $\check{\matH} = \matU^{-1}\matH\in\bbR^{N\times L}$ which take the following forms:
\begin{equation}
\check{\matD} = \begin{bmatrix}
\check{\matD}_{(1)} & \check{\matD}_{(2)}\\
\zero & \check{\matD}_{(3)}
\end{bmatrix} \hspace{0.5cm}
\check{\matH} = \begin{bmatrix}
\check{\matH}_{(1)} \\
\zero
\end{bmatrix},
\end{equation}
where $\check{\matD}_{(1)}\in\bbR^{R\times R}$ and $\check{\matH}_{(1)}\in\bbR^{R\times L}$.
\item Use the Jordan decomposition to get the following:
\begin{equation}
\check{\matD}_{(1)} = \matV \begin{bmatrix}
\bar{\matD}_{(11)}\in\bbR^{r\times r} & \zero\\
\zero & \zero
\end{bmatrix}\matV^{-1}
\end{equation}
 where  $ \matV\in\bbR^{R\times R}$ and $r\leq R$ is the rank of $\check{\matD}_{(1)}$.
\item  Define an invertible matrix $\matW\in\bbR^{N\times N}$ as follows:
\begin{equation}
\matW\triangleq \begin{bmatrix}
\matV\in\bbR^{R\times R} & \zero\in\bbR^{R\times N-R}\\
\zero \in \bbR^{N-R\times R} &
\eye \in \bbR^{N-R\times N-R}
\end{bmatrix}.
\end{equation}
\item Compute $\bar{\matD} = \matW^{-1}\check{\matD}\matW\in\bbR^{N\times N}$ and $\bar{\matH} = \matW^{-1}\check{\matH}\in\bbR^{N\times L}$, which take the following forms:
\begin{align}\notag
\bar{\matD} = \begin{bmatrix}
\bar{\matD}_{(11)}  & \zero&\bar{\matD}_{(21)}\in\bbR^{r\times N-R}\\
\zero  & \zero&\bar{\matD}_{(22)}\in\bbR^{R-r\times N-R}\\
\zero & \zero& \bar{\matD}_{(3)}\in\bbR^{N-R\times N-R}
\end{bmatrix} 
\bar{\matH} =\begin{bmatrix}
\bar{\matH}_{(1)} \\
\bar{\matH}_{(2)}\\
\zero
\end{bmatrix}.
\end{align}
where $\bar{\matH}_{(1)}\in\bbR^{r\times L} $ and $\bar{\matH}_{(2)}\in\bbR^{R-r\times L}$.
Define  $R_s\triangleq r+ \min\lc s,R-r\rc$. Then, the part of the state vector corresponding to the first $R_s$ entries is $s$-sparse-controllable, while the remaining part is $s$-sparse-uncontrollable. Also, since $\bar{\matD}=\lb\matU\matW\rb^{-1}\matD\lb\matU\matW\rb$ and $\bar{\matH}=\lb\matU\matW\rb^{-1}\matH$, the new basis is $\matU\matW$.\label{step:5}
\end{enumerate}

Here, steps 1 and 2 are the same as the Kalman decomposition, and in steps 3 and step 4, we find a basis that separates the sparse-controllable part from the controllable part. Let $\lb \matU\matW\rb^{-1}\vecx_k\tran =\begin{bmatrix}
\vecalpha_k\tran \in\bbR^{r}& \vecbeta_k\tran\in\bbR^{R-r} &\vecgamma_k\tran\in\bbR^{N-R}
\end{bmatrix}$. We then have the following equations which are equivalent to \eqref{eq:sys}:
\begin{align}
\vecalpha_k &= \bar{\matD}_{(11)}\vecalpha_{k-1}+\bar{\matD}_{(21)}\vecgamma_{k-1}+\bar{\matH}_{(1)}\vech_k\\
\vecbeta_k &= \bar{\matD}_{(22)}\vecgamma_{k-1}+\bar{\matH}_{(2)}\vech_k\\
\vecgamma_k &= \bar{\matD}_{(3)}\vecgamma_{k-1}.
\end{align}
Clearly, $\vecgamma_k$ is uncontrollable as it is independent of the input sequence. Further, the Kalman decomposition ensures that the part of the state vector corresponding to $\begin{bmatrix}
\vecalpha_k\tran &\vecbeta_k\tran
\end{bmatrix} \tran$ is controllable. Thus, 
\begin{equation}
\rank{\begin{bmatrix}
\bar{\matD}_{(11)}-\lambda\eye & \zero &\bar{\matH}_{(1)}\\
\zero & \zero &\bar{\matH}_{(2)}
\end{bmatrix}}=R,
\end{equation}
for any $\lambda\in\bbC$, and hence
\begin{equation}
\rank{\begin{bmatrix}
\bar{\matD}_{(11)}-\lambda\eye & \zero &\bar{\matH}_{(1)}\\
\zero & \zero &\lb\bar{\matH}_{(2)}\tran\rb_{\calS}\tran
\end{bmatrix}}=r+\lv\calS\rv,
\end{equation}
for any index set $\calS\subseteq[R-r]$. Therefore, from the inequality constraint of \Cref{thm:necessary_sufficient}, choosing $\lv\calS\rv=\min\lc s,R-r\rc$ ensures that the part of $\begin{bmatrix}
\vecalpha_k\tran & \vecbeta_{k,\calS}\tran
\end{bmatrix} \in \bbR^{R_s}$ corresponds to the sparse controllable part of the state vector. We choose $\calS$ as the top $R_s-r$ indices of the new state vector. We note that this holds because $\alpha_k$ is independent of $\beta_{k-1}$, and $\beta_k$ is independent of both $\alpha_{k-1}$ and $\beta_{k-1}$.
We illustrate the decomposition procedure with the following example.
\begin{exam}
Consider a linear system with $N=4$, $L=3$, $s=1$:
\begin{align*}
\!\!\!\matD \!=\!\! \begin{bmatrix}
	\!5.65  &       \!0 &  \!-1.25  & \!-7.95\\
    \!3.3   &     \!0   &\!-0.9 & \!-4.7\\
   \!-0.55  &     \! 0 & \! 0.35    &\! 0.85\\
    \!3.4   &      \!0  &\!-0.8   & \!-4.8\\
\end{bmatrix}
 \matH \!= \begin{bmatrix}
\!0.25  & \!1.25  & \! 1.5 \\
  \!  0.25 &   \!1.25  & \! 1.5\\
   \!-0.5  & \!-0.75 & \!-1.25\\
    \!0.25  & \! 1   & \!1.25\\
\end{bmatrix}
\end{align*}
Following the above procedure, from step 1
\begin{equation}
\matU = \begin{bmatrix}
1     & 0  &   4  &   1\\
     2  &  -1    & 3   &  0\\
    -2   &  0    &-1    & 1\\
     1    & 0     & 3   &  0
\end{bmatrix}.
\end{equation}
Step 2 gives the following with $R_1=3$:
\begin{equation}
\check{\matD}_{(1)} =
\begin{bmatrix}
0.2    &     0 &   0  \\
         0    &     0 &  0\\
0     &    0 &  0  \\
\end{bmatrix}, 
\check{\matH}_{(1)} = \begin{bmatrix}
0.25   &  0.25  &  0.5\\
    0.25    & 0  &  0.25\\
         0  &  0.25   & 0.25
\end{bmatrix}
\end{equation} 
Since $\check{\matD}_{(1)}$ is already in the Jordan form, $\matW=\eye$.
Finally, step 5 gives $R_s=2$
and the top $2$ entries of the state correspond to the $1-$sparse-controllable part of the system. It can be easily verified that the system defined using  \Cref{thm:necessary_sufficient}.
\end{exam}
\noindent \emph{Remark: } The linear system defined by $(\matD, \matH,\matA)$ is output $s$-sparse-controllable if and only if the system defined by $(\matU^{-1}\matD\matU, \matU^{-1}\matH, \matA\matU)$ is output $s$-sparse-controllable for any nonsingular  $\matU\in\bbR^{N\times N}$. 

 }

\section{Conclusions}
We presented two easily verifiable necessary and sufficient
conditions for controllability of linear
systems subject to sparsity constraints on the  input. 
Further, we bounded  the minimum number of sparse input vectors that ensure controllability.
The sparse-controllability tests led to a Kalman decomposition-like procedure for separating the system into sparse-controllable, controllable but sparse-uncontrollable, and uncontrollable parts. We also extended our results to the output controllability and controllability using sparse inputs with a common support.  However, our work does not impose any  constraint on the $\ell_{\infty}$ norm of the input vector, which may be required in applications where the maximum input magnitude is constrained. Addressing sparse-controllability under this constraint is an interesting avenue for future work.

\appendices
\crefalias{section}{app}

\section{Proof of \Cref{thm:necessary_sufficient}}
\label{app:necessary_sufficient}
\begin{proof}
We show that the conditions of the theorem are equivalent to the Kalman-type rank test. The proof relies on the fact that the Kalman rank test for the unconstrained system is equivalent to the PBH test, which is the same as the rank condition of \Cref{thm:necessary_sufficient}~\cite{Hautus_Stabilization_1970}.

We first prove that conditions of \Cref{thm:necessary_sufficient} imply the Kalman-type rank test. Suppose that the Kalman-type rank test fails. Then,  consider the following matrix of size $N\times N\tilde{K}s$:
\begin{multline}\label{eq:control_mat}
\tilde{\matH}^* =
[ \begin{matrix}
\matD^{\tilde{K} N-1\!}\matH_{\calS_1} &  \matD^{\!\tilde{K} N-2}\!\matH_{\calS_1} &\ldots 
\!\!& \matD^{(\tilde{K}-1\!) N}\!\matH_{\calS_1}  \end{matrix}\\
\begin{matrix}
\ldots & \matD^{(\tilde{K}-1) N-1}\matH_{\calS_2}&\ldots \matD^{(\tilde{K}-2) N}\matH_{\calS_2} &\ldots
\end{matrix}\\
\begin{matrix}
\ldots &
  \matD^{N-1}\matH_{\calS_{\tilde{K}}}&\ldots&\matH_{\calS_{\tilde{K}}}
\end{matrix}],
\end{multline}
where  we define $\tilde{K}\triangleq\lceil L/s\rceil$ index sets as follows:
\begin{equation}\label{eq:index_set}
\lv\calS_i\rv=s,\hspace{0.5cm} \cup_{i=1}^{\tilde{K}}\calS_i=[L]. 
\end{equation}
{We note that $\tilde{\matH}^*$ has the same form as that of the matrix for the Kalman-type rank test for sparse-controllability in \eqref{eq:submatrix_form}, with $K$ as $N\tilde{K}$.}
Since the Kalman-type rank test fails, $\tilde{\matH}^*$ does not have full row rank.  Further, we can rearrange the columns of $\tilde{\matH}^*$ to get the following matrix which has the same rank as that of $\tilde{\matH}^*$:
\begin{equation*}
\begin{bmatrix}
\matD^{N-1}\matH^* & \matD^{N-2}\matH^* & \ldots & \matH^*
\end{bmatrix},
\end{equation*} where we define the matrix $\matH^*\in\bbR^{N\times \tilde{K}s}$ as follows:
\begin{equation}\notag
\matH^*\triangleq\begin{bmatrix}
\matD^{(\tilde{K}-1)N}\matH_{\calS_1} & 
\matD^{(\tilde{K}-2)N}\matH_{\calS_2}
\ldots 
\matH_{\calS_{\tilde{K}}}
\end{bmatrix}.
\end{equation} Thus, using the classical Kalman rank test for the unconstrained inputs, the system defined by the matrix tuple $(\matD,\matH^*)$ is not controllable. Then, the classical PBH test for the unconstrained inputs implies that there exists $\lambda\in\bbC$ such that 
$\rank{\begin{bmatrix}
\matD-\lambda\eye \  \matH^*
\end{bmatrix}}<N.$
Therefore, there exists a nonzero vector $\vecz\in\bbR^N$ such that $\vecz\tran\matD=\lambda\vecz\tran$ and $\vecz\tran\matH^*=\zero$. However, we have,
\begin{equation}
\zero\! = \vecz\tran\matH^* \!= \vecz\tran\!\begin{bmatrix}
\lambda^{\!(\tilde{K}-1)N}\!\matH_{\calS_1} & 
\lambda^{\!(\tilde{K}-2)N}\!\matH_{\calS_2}
\ldots &
\matH_{\calS_{\tilde{K}}}
\!\end{bmatrix}.
\end{equation}
So either $\lambda=0$ and $\vecz\tran\matH_{\calS_{\tilde{K}}}=\zero$, or, if $\lambda\neq 0$,  $\vecz\tran\matH=\zero$ because $\vecz$ is orthogonal to all columns of $\matH$ due to \eqref{eq:index_set}. 
Hence, for every index set $\calS_i$ with $s$ entries, there exists $\vecz\in\bbR^N$ such that $\vecz\tran\matD=\lambda\vecz\tran$, and either
$\lambda=0$ and $\vecz\tran\matH_{\calS_{i}}=\zero$, or $\vecz\tran\matH=\zero$. Therefore, one of the following cases hold: 
\begin{enumerate}[label=(\alph*),leftmargin=0.3cm]
 \item There exists a left eigenvector $\vecz$ of $\matD$, such that $\vecz\tran\matH=\zero$. Thus, {the rank condition} of  \Cref{thm:necessary_sufficient} does not hold.
 \item {For every left eigenvector $\vecz$ of $\matD$, we have $\vecz\tran\matH\neq \zero$. However, for every index set $\calS$ with $s$ entries, there exists a nonzero vector $\vecz\in\bbR^N$ such that $\vecz\tran\matD=\zero$, and $\vecz\tran\matH_{\calS}=\zero$. This implies that $\rank{\begin{bmatrix}
 \matD & \matH
 \end{bmatrix} }=N$ and for every index set $\calS$ with $s$ entries, there exits $\vecz\in\bbR^N$ such that $\vecz\tran\begin{bmatrix}
 \matD & \matH_{\calS}
 \end{bmatrix}=\zero$. Therefore,  $s < N-R_{\matD}\leq R_{\matH}$. Thus, the inequality condition \Cref{thm:necessary_sufficient} does not hold.}
\end{enumerate}
Thus, when the Kalman-type rank test is unsuccessful, the conditions of the theorem are also violated.

Next, we prove that the Kalman-type rank test implies the conditions of the theorem. Suppose that the two conditions do not hold simultaneously. This could happen under the following two exhaustive cases:
\begin{enumerate}[label=(\alph*)]
\item Suppose that {the rank condition} does not hold. Then, the PBH test is violated which implies that the system is not controllable and thus, it cannot be sparse-controllable. 

\item Suppose that {the rank condition holds, but the inequality condition} does not hold. Then, for every index set $\calS$ with $s$ entries, there exists a nonzero vector $\vecz$ such that $\vecz\tran\matH_{\calS}=\zero$ and $\vecz\tran\matD=\zero$. This implies that for any set of $K>0$ index sets $\lc\calS_i:\lv\calS_i\rv=s\rc_{i=1}^K$ there exists a nonzero vector $\vecz\in\bbR^N$ such that 
\begin{equation}
\vecz\tran\begin{bmatrix}
\matD^{K-1}\matH_{\calS_1} & \matD^{K-2}\matH_{\calS_2} & \ldots & \matH_{\calS_K}
\end{bmatrix} = \zero.
\end{equation}
Hence, the Kalman-type rank test for fails.
\end{enumerate}
Thus, the proof is complete. 
\end{proof}

\section{Proof of \Cref{thm:num_inp}}\label{app:num_inp}
Using the Kalman-type rank test,  the minimum number of input vectors required to 
ensure controllability is the smallest integer $K$ that satisfies the rank condition of the test. So, for any  finite $K$, we define $\calH_{(K)}\subseteq\bbR^{N\times Ks}$ as the set of submatrices of $\tilde{\matH}_{(K)}$ of the form given in \eqref{eq:submatrix_form}.
Also, we define the following:
\begin{align}
R^*_{(K)} &= \underset{\matH_{(K)}\in\calH_{(K)}}{\max} \;\rank{\matH_{(K)}},\label{eq:R*_defn}\\
\calH^*_{(K)} &= \left\{\matH_{(K)}\in\calH_{(K)}: \rank{\matH_{(K)}}= R^*_{(K)}\right\} .
\end{align}
With these definitions, $K^*$ is the smallest integer such that $R^*_{(K^*)}=N$.

Before starting the proof, we outline the main steps involved. At a high level, there are five steps to the proof:
\begin{enumerate}[label=\Alph*.,leftmargin=0.3cm]
\item We begin by showing that for any matrix $\matH_{(K)}\in\calH_{(K)}$, we can find a matrix $\matH^*_{(K)}\in\calH^*_{(K)}$ such that
\begin{equation}
\calCS\lc\matH_{(K)}\rc \subseteq \calCS\lc\matH_{(K)}^*\rc.
\end{equation}
\item Second, using the above claim, we show that if $K$ is any integer such that 
\begin{equation}\label{eq:rank_con_index}
R^*_{(K)}=R^*_{(K+1)},
\end{equation} 
then
$R^*_{(K+Q)}=R^*_{(K)}$, for any positive integer $Q$. 
\item Third, we prove that $K^*$ is the smallest integer $K$ such that \eqref{eq:rank_con_index} holds,
which in turn leads to the upper bound: $K^*\leq  N+1-R_{\matH,s}^*$, where $R_{\matH,s}^*$ is as defined in the statement of the theorem.
\item Fourth, we show that in order to satisfy the rank criterion in \eqref{eq:rank_con_index}, $\matH^*_{(K^*)}$ needs to contain at most $qS^*$ number of columns with a particular structure. Then, we provide a choice of index sets $\lc\calS_i\rc_{i=1}^{K=q\lceil S^*/s\rceil}$ which can lead to that particular structure. Since the smallest integer $K$ that can achieve rank criterion in \eqref{eq:rank_con_index} is $K^*$, we assert that $K^*\leq q\lceil S^*/s\rceil$. Thus, together with the above step, we establish the upper bound in the theorem.
\item Finally, we lower bound $K^*$ to complete the proof.
\end{enumerate}
\subsection{Characterizing $\calH^*_{(K)}$}
If $\matH_{(K)}\in\calH^*_{(K)}$, the result is trivial: $\matH_{(K)}^*=\matH_{(K)}$. Suppose that $\matH_{(K)}\notin\calH^*_{(K)}$, then 
$\rank{\matH_{(K)}}< R^*_{(K)}$. Therefore, to find $\matH_{(K)}^*$, we have to replace some linearly dependent columns of $\matH_{(K)}$ with columns which are linearly independent of the rest of the columns of $\matH_{(K)}$, as follows:
\begin{enumerate}[label=(\alph*),leftmargin=0.3cm]
\item Find a set $\lc\vecu_i\rc_{i=1}^{\rank{\matH_{(K)}}}$ of columns of $\matH_{(K)}$ that are  linearly independent and span $\calCS\lc\matH_{(K)}\rc$. 

\item Since $\matH_{(K)}$ is a submatrix of $\tilde{\matH}_{(K)}$, we can extend the set $\lc\vecu_i\rc_{i=1}^{\rank{\matH_{(K)}}}$ to form a basis $\lc\vecu_i\rc_{i=1}^{\rank{\tilde{\matH}_{(K)}}}$ of $\calCS\lc\tilde{\matH}_{(K)}\rc$ by adding  columns from $\tilde{\matH}_{(K)}$. We note that $\vecu_i=\matD^p\matH_j$ for some integers $p$ and $j$ because of the structure of $\tilde{\matH}_{(K)}$.

\item  Replace the linearly dependent columns of $\matH_{(K)}$ with the columns from the set $\lc\vecu_i\rc_{i=\rank{\matH_{(K)}}+1}^{\rank{\tilde{\matH}_{(K)}}}$ to get a new matrix $\bar{\matH}_{(K)}\in\bbR^{N\times Ks}$. 
We only replace a column of the form $\matD^p\matH_j$ in $\matH_{(K)}$ with another column of the form $\matD^p\matH_{j'}$, for all $p$ and $j$ and some integer $j'$. This ensures that $\bar{\matH}_{(K)}\in\calH_{(K)}$. In this fashion, we replace as many columns of $\matH_{(K)}$ as necessary to ensure that $\bar{\matH}_{(K)}$ has the maximum rank, $R^*_{(K)}$. However, since we are only replacing linearly dependent columns, we have
\begin{equation}\label{eq:CS_replace_1}
\calCS\lc\matH_{(K)}\rc \subseteq \calCS\lc\bar{\matH}_{(K)}\rc.
\end{equation}
\end{enumerate}
Since $\rank{\bar{\matH}_{(K)}}=R^*_{(K)}$ and $\bar{\matH}_{(K)}\in\calH_{(K)}$, we get that $\bar{\matH}_{(K)}\in\calH_{(K)}^*$, satisfying \eqref{eq:CS_replace_1}. Hence, the first step of the proof is complete.
\subsection{Characterizing $R^*_{(K)}$}
We use induction to show that $R^*_{(K+Q)}=R^*_{(K)}$, for any integer $Q>0$. Hence, it suffices to show the following:
\begin{equation}\label{eq:goal_step_1}
R^*_{(K+2)}=R^*_{(K+1)}.
\end{equation}
From \eqref{eq:R*_defn}, we know that $R^*_{(K+2)}\geq R^*_{(K+1)}$. Also, 
\begin{equation}
R^*_{(K)} = \underset{\matH_{(K)}\in\calH_{(K)}}{\max}\text{dim}\lc\calCS\lc \matH_{(K)}\rc\rc,
\end{equation}
where $\text{dim}\{\cdot\}$ denotes the dimension of a subspace. Thus,
we establish \eqref{eq:goal_step_1} by showing that for any matrix $\matH_{(K+2)}\in\calH_{(K+2)}$, there exists a matrix $\matH_{(K+1)}^*\in\calH^*_{(K+1)}$ such that
\begin{equation}
\calCS\lc \matH_{(K+2)}\rc\subseteq \calCS\lc\matH_{(K+1)}^*\rc.
\end{equation}

We prove this relation by separately looking at the column spaces  spanned by the first $s$ columns and the last $(K+1)s$ columns of $\matH_{(K+2)}$. We know that the submatrix formed by the last $(K+1)s$ columns of any matrix in $\calH_{(K+2)}$ belongs to $\calH_{(K+1)}$. Thus, using the claim in the first step, we can find a matrix $\matH^*_{(K+1)}$ such that the column space spanned by the last $(K+1)s$ columns of $\matH_{(K)}$ is contained in $\calCS\lc\matH^*_{(K+1)}\rc$. Therefore, it suffices to show that the column space spanned by the first $s$ columns of $\matH_{(K+2)}$ is contained in the column space of 
$\matH^*_{(K+1)}$. 

To prove the above statement, we note that the column space of the first $s$ columns of $\matH_{(K+2)}$ is contained in $\calCS\lc\matD^{K+1}\matH\rc$. Also, $\calCS\lc\matH^*_{(K+1)}\rc$ 
contains $\underset{\matH^*_{(K+1)}\in\calH^*_{(K+1)}}{\cap}\calCS\lc\matH^*_{(K+1)}\rc$. Hence, it suffices to show that
\begin{equation}\label{eq:goal_step_2}
\calCS\lc\matD^{K+1}\matH\rc \subseteq \underset{\matH^*_{(K+1)}\in\calH^*_{(K+1)}}{\cap}\calCS\lc\matH^*_{(K+1)}\rc,
\end{equation}
which we prove using the relation \eqref{eq:rank_con_index}. 

To show that \eqref{eq:goal_step_2} holds, we consider an index set $\calS\subseteq[L]$ with $s$ entries and a matrix $\matH^*_{(K)}\in\calH^*_{(K)}$. Now, the matrix $\begin{bmatrix}
\matD^{K}\matH_{\calS} & \matH^*_{(K)}
\end{bmatrix}\in\bbR^{N\times (K+1)s}$ belongs to $\calH_{(K+1)}$. Thus, from \eqref{eq:R*_defn} and \eqref{eq:rank_con_index} we have 
\begin{equation}
\rank{\begin{bmatrix}
\matD^{K}\matH_{\calS} & \matH^*_{(K)}
\end{bmatrix}} \leq R^*_{(K+1)}=R^*_{(K)}.
\end{equation}
However, we also have
\begin{equation}
\rank{\begin{bmatrix}
\matD^{K}\matH_{\calS} & \matH^*_{(K)}
\end{bmatrix}}\geq \rank{ \matH^*_{(K)}} \\
= R^*_{(K)}.
\end{equation}
Thus,  for all index sets $\calS$ with $s$ entries and any matrix $\matH^*_{(K)}\in\calH^*_{(K)}$,
\begin{equation}
\rank{\begin{bmatrix}
\matD^{K}\matH_{\calS} & \matH^*_{(K)}
\end{bmatrix}} = \rank{ \matH^*_{(K)}}
\end{equation}
 This relation immediately implies the following:
\begin{equation}
\rank{\begin{bmatrix}
\matD^{K}\matH & \matH^*_{(K)}
\end{bmatrix}} = \rank{\matH^*_{(K)}},\label{eq:rank_submatrix}
\end{equation}
for any matrix $\matH^*_{(K)}\in\calH^*_{(K)}$.
Thus, we get that the columns of $\matD^K\matH$ belong to $\calCS\lc\matH^*_{(K)}\rc$, for any matrix $\matH^*_{(K)}\in\calH^*_{(K)}$. Hence, 
\begin{equation}
\calCS\lc\matD^{K}\matH\rc \subseteq \underset{\matH^*_{(K)}\in\calH^*_{(K)}}{\cap}\calCS\lc\matH^*_{(K)}\rc.
\end{equation} 
Therefore, we get
\begin{equation}\label{eq:CS_DH_subset}
\calCS\lc\matD^{K+1}\matH\rc \subseteq \underset{\matH^*_{(K)}\in\calH^*_{(K)}}{\cap}\calCS\lc\matD\matH^*_{(K)}\rc.
\end{equation} 
Hence, to prove \eqref{eq:goal_step_2}, we need to show that 
\begin{equation}\label{eq:goal_step_3}
\underset{\matH^*_{(K)}\in\calH^*_{(K)}}{\cap}\calCS\lc\matD\matH^*_{(K)}\rc \subseteq \underset{\matH^*_{(K+1)}\in\calH^*_{(K+1)}}{\cap}\calCS\lc\matH^*_{(K+1)}\rc.
\end{equation}

We prove the above relation by showing that
there exists a matrix $\matH^*_{(K+1)}\in \calH^*_{(K+1)}$ such that 
\begin{equation}\label{eq:goal_step_4}
\calCS\lc\matD\matH^*_{(K)}\rc \subseteq \calCS\lc\matH^*_{(K+1)}\rc,
\end{equation}
for every matrix $\matH^*_{(K)}\in\calH^*_{(K)}$.
So we consider a new matrix $\bar{\matH}_{(K+1)} \in \bbR^{N\times (K+1)s}$ as follows:
\begin{equation}\label{eq:barH_defn}
\bar{\matH}_{(K+1)} \triangleq \begin{bmatrix}
\matD\matH^*_{(K)} & \matH_{\calS}
\end{bmatrix},
\end{equation}
for some index set $\calS\subseteq[L]$ and $\lv\calS\rv=s$. Since $\bar{\matH}_{(K+1)}\in\calH_{(K+1)}$, using the arguments in the first step, we can find a matrix $\matH^*_{(K+1)}\in\calH^*_{(K+1)}$ such that  
\begin{equation}
\calCS\lc\bar{\matH}_{(K+1)} \rc \subseteq \calCS\lc\matH^*_{(K+1)}\rc.
\end{equation}
However, \eqref{eq:barH_defn} implies that
$\calCS\lc\matD\matH^*_{(K)} \rc \subseteq \calCS\lc\bar{\matH}_{(K+1)}\rc.$ 
Therefore, \eqref{eq:goal_step_4} holds, and hence \eqref{eq:goal_step_3} is proved. 

Recall that \eqref{eq:goal_step_3} implies \eqref{eq:goal_step_2}, which in turn establishes the relation \eqref{eq:goal_step_1}. By mathematical induction, we conclude that $\rank{\matH^*_{(K+Q)}}=\rank{\matH^*_{(K)}}$, for any positive integer $Q$, and the proof of the second step in the outline is complete.

\subsection{First part of the upper bound}
Suppose that $K_*$ is the smallest integer such that $R^*_{(K_*)}=R^*_{(K_*+1)}$. 
From \eqref{eq:R*_defn}, it is clear that 
\begin{equation}
R^*_{(K)}\leq R^*_{(K+1)}\leq N,
\end{equation} for any positive integer $K$. Since $R^*_{(K^*)}=N$, we have $R^*_{(K^*)}=R^*_{(K^*+1)}=N$. Therefore, $K_*\leq K^*$, and $R^*_{(K_*)}=N$ from the claim in the second step.

Further, since $K^*$ is the smallest integer such that $R^*_{(K^*)}=N$, we have  $K_*=K^*$. Hence,  $R^*_{(K)}$ strictly increases with $K$, for $1\leq K \leq K^*$, and we have 
\begin{align}
N &= R^*_{(K^*)} \geq R^*_{(K^*-1)}+1 \geq R^*_{(K^*-2)}+2\notag \\
&\geq R^*_{(1)} + K^*-1= R_{\matH,s}^*+ K^*-1.
\end{align}
Hence, the third step in the outline is complete.

\subsection{Upper bounding $K^*$}
To prove that $K^*\leq q\lceil {S^*}/s\rceil$,  we  first look at the linearly independent columns in $\matH^*_{(K^*)}$. For any $K$, each column of $\matH^*_{(K)}$ is of the form $\matD^p\matH_j$, for some integer $p$, and $j\in[L]$. However, since $q$ is the degree of the minimal polynomial of $\matD$,  for any integer $Q$, $\matD^p$ can be expressed as a linear combination of $\lc\matD^{i}\rc_{i=Q}^{Q+q-1}$, for all $p\geq Q$. Therefore,
for any $j$, if $\lc\matD^{i}\matH_j\in\bbR^N\rc_{i=Q}^{Q+q-1}$ are any $q$ columns of $\matH^*_{(K)}$, further adding columns of the form $\matD^p\matH_j$,  $p\geq Q$, does not improve the rank of the matrix. Therefore, for a given $j$, at most $q$ columns of the form $\matD^p\matH_j$ need to be present in $\matH^*_{(K)}$ to ensure the rank criterion in \eqref{eq:rank_con_index}. 

{Further, let  $\matH_{\calS'}$ with $\calS'\subseteq[L]$ represent the smallest set of columns of $\matH$ such that the linear system described by the tuple $\lb\matD,\matH_{\calS}\rb$ is controllable. As given in the statement of the theorem, let $S^*=\lv\calS'\rv$.} Then, for any integer $p$, if $\lc\matD^{p}\matH_j\in\bbR^N\rc_{j\in\calS'}$ are any {$S^*$} columns of $\matH^*_{(K)}$, further adding columns of the form $\matD^p\matH_j$, for $j\notin \calS'$ does not improve the rank of the matrix. Thus, for any given $p$, at most {$S^*$} columns of the form $\matD^p\matH_j$ need to be present in $\matH^*_{(K)}$ to ensure the rank criterion. 

In short, we have proved that, in order to ensure the rank criterion in \eqref{eq:rank_con_index}, $\matH^*_{(K)}$ needs to have at most $q$ columns of the form $\matD^p\matH_j$, for any given $j$, and  at most {$S^*$} columns of the form $\matD^p\matH_j$, for any given $p$. Hence, $\matH^*_{(K)}$ needs to have  at most $q{S^*}$ columns to satisfy the rank criterion in~\eqref{eq:rank_con_index}.


Finally, we provide a choice of index sets for each input vector, that satisfies the above conditions. We form  index sets $\lc\calS_i'\rc_{i=1}^{K=\lceil {S^*}/s\rceil}$ that partition the set of  ${S^*}$  columns into groups of size at most $s$. The index sets are selected such that $\cup_{i=1}^{K}\calS'_i=\calS'$, $\lv\calS_i\rv=s$, and $\calS_{K}$ is such that 
 $\begin{bmatrix}
\matD & \matH_{\calS_K}
\end{bmatrix}$ has rank $N$. The existence of such an index set $\calS_K$ is ensured by \Cref{thm:necessary_sufficient}, and they need not be disjoint. Next, we choose $\calS_i=\calS_j'$, for $i=(j-1)q+1,(j-1)q+2,\ldots,jq$. Hence, we get the following  submatrix   of $\tilde{\matH}_{(K)}\in\bbR^{N\times qKL}$:
\begin{multline}
\matH^*_{(K)} = [ \begin{matrix}
\matD^{K q-1\!}\matH_{\calS_1} &  \matD^{\!K q-2}\!\matH_{\calS_1} &\ldots 
\!\!& \matD^{(K-1\!) q}\!\matH_{\calS_1}  \end{matrix}\\
\begin{matrix}
\ldots & \matD^{(K-1) q-1}\matH_{\calS_2}&\ldots \matD^{(K-2) q}\matH_{\calS_2} &\ldots
\end{matrix}\\
\begin{matrix}
\ldots &
  \matD^{q-1}\matH_{\calS_{K}}&\ldots&\matH_{\calS_{K}}
\end{matrix}].
\end{multline}
It is easy to see that this choice of index sets ensures that for any given $p$, ${S^*}$ columns of the form $\matD^p\matH_j$ are present in $\matH^*_{(K)}$. Also, for any given $j\in\calS'$, $q$ columns of $\lc\matD^{i}\matH_j\in\bbR^N\rc_{i=Q}^{Q+q-1}$ are present in 
$\matH^*_{(K)}$. Hence, $K^*\leq q\lceil {S^*}/s\rceil$, which establishes the upper bound in \eqref{eq:min_inp}.

\subsection{Lower bounding $K^*$}
The lower bound is achieved when all columns of $\matH_{(K)}^*$ are linearly independent. Thus, to ensure that rank $\matH_{(K)}^*$ is $N$, $Ks\geq N$. However, if $s\geq R_{\matH}$, the maximum number of independent columns become $KR_{\matH}$, and thus we get that $KR_{\matH}\geq N$. Hence, $K\min\lc R_{\matH},s\rc\leq N$, and the lower bound in \eqref{eq:min_inp} is proved.

As noted in the proof outline, this suffices to establish \Cref{thm:num_inp}.\bibliographystyle{IEEEtran}
\bibliography
{IEEEabrv,bibJournalList,SparseControllability}
\par\leavevmode

\begin{IEEEbiography}
[{\includegraphics
[width=1in,height=1.25in,clip,keepaspectratio]
{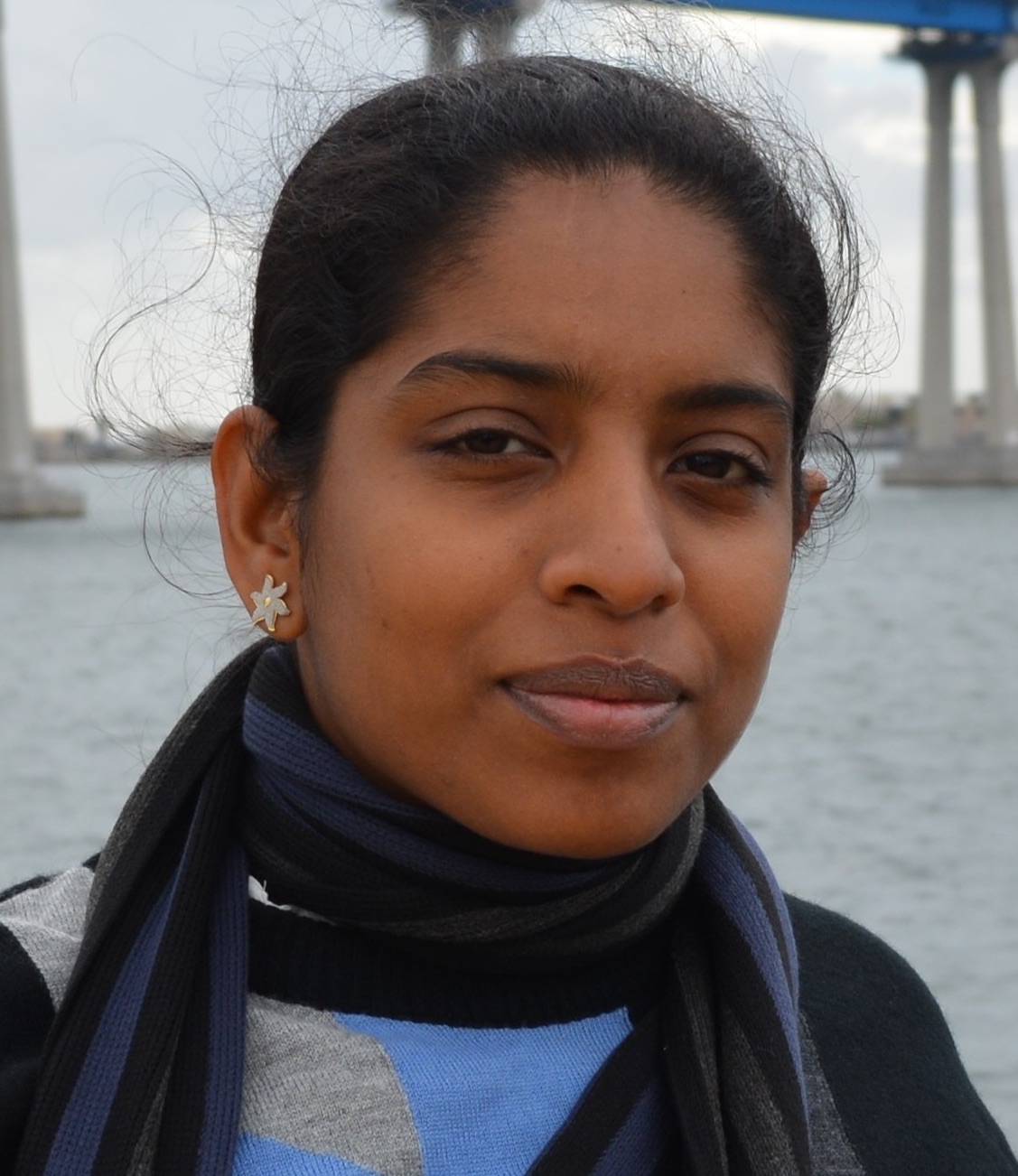}}]
{Geethu Joseph}
received the B.~Tech.~degree in Electronics and Communication Engineering from National Institute of Technology, Calicut, India, in
2011, and the M.\ E.\ degree in Signal Processing and the Ph.D. degree in Electrical Communication Engineering (ECE), from the Indian Institute of Science (IISc), Bangalore, in 2014 and 2019, respectively. She was awarded the Prof. I. S. N. Murthy medal in 2014 for being the best M.~E.~(signal processing) student in the Dept.\ of ECE, IISc. She is currently a post-doctoral fellow at the Department of Electrical Engineering and Computer Science, Syracuse University, NY. Her research interests include statistical signal processing, adaptive filter theory, sparse Bayesian learning, and compressive sensing.
\end{IEEEbiography}

\begin{IEEEbiography}
[{\includegraphics
[width=1in,height=1.25in,clip,keepaspectratio]
{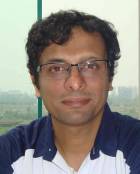}}]
{Chandra R. Murthy}
received the B.\ Tech.\ degree in Electrical Engineering from the Indian Institute of Technology Madras,
Chennai, India, in 1998, the M.S. and Ph.D. degrees
in Electrical and Computer Engineering from Purdue
University, West Lafayette, IN and the University
of California, San Diego, CA, in 2000 and 2006,
respectively. From 2000 to 2002, he worked as an engineer for
Qualcomm Inc., San Jose, USA, where he worked on
WCDMA baseband transceiver design and 802.11b
baseband receivers. From 2006 to 2007, he worked as a staff
engineer at Beceem Communications Inc., Bangalore, India on advanced receiver architectures for the 802.16e Mobile WiMAX standard. Currently, he is working as a Professor in the Department of Electrical Communication Engineering at the Indian Institute of Science, Bangalore, India.

His research interests are in the areas of energy harvesting communications, multiuser MIMO systems, and sparse signal recovery techniques applied to wireless communications. His paper won the best paper award in the Communications Track at NCC 2014 and a paper co-authored with his student won the student best paper award at the IEEE ICASSP 2018. He has 50+ journal papers and 80+ conference papers to his credit. He was an associate editor for the IEEE Signal Processing Letters during 2012-16. He is an elected member of the IEEE SPCOM Technical Committee for the years 2014-16, and has been re-elected for the 2017-19 term. He is a past Chair of the IEEE Signal Processing Society, Bangalore Chapter. He is currently serving as an associate editor for the IEEE Transactions on Signal Processing and IEEE Transactions on Information Theory and as an editor for the IEEE Transactions on Communications.
\end{IEEEbiography}

\end{document}